\newcounter{mynotes}
\def\notes{1}
\newcommand{\gnote}[1]{\ifnum\notes=1{{\sf\color{blue} [Gopi: #1]}}\fi}
\declaretheorem[within=section]{theorem}
\declaretheorem[sibling=theorem]{corollary}
\declaretheorem[sibling=theorem]{lemma}
\declaretheorem[sibling=theorem]{claim}
\declaretheorem[sibling=theorem]{definition}
\declaretheorem[sibling=theorem]{proposition}
\declaretheorem[sibling=theorem]{remark}
\def\ba{{\mathbf a}}
\def\bb{{\mathbf b}}
\def\bc{{\mathbf c}}
\def\bd{{\mathbf d}}
\def\be{{\mathbf e}}
\def\br{{\mathbf r}}
\def\bw{{\mathbf w}}
\newcommand{\R}{\mathbb{R}} 
\newcommand{\cQ}{\mathcal Q}
\newcommand{\argmin}{\mathrm{argmin}} 
\newcommand{\inpro}[2]{\left\langle #1,#2 \right\rangle} 
\renewcommand{\epsilon}{\varepsilon}
  \newcommand{\beq}{\begin{equation}}
  \newcommand{\eeq}{\end{equation}}
  \newcommand{\beqn}{\begin{equation*}}
  \newcommand{\eeqn}{\end{equation*}}
  \newcommand{\beqr}{\begin{eqnarray}}
  \newcommand{\eeqr}{\end{eqnarray}}
  \newcommand{\beqrn}{\begin{eqnarray*}}
  \newcommand{\eeqrn}{\end{eqnarray*}}
  \newcommand{\bmline}{\begin{multline}}
  \newcommand{\emline}{\end{multline}}
  \newcommand{\bmlinen}{\begin{multline*}}
  \newcommand{\emlinen}{\end{multline*}}
\def \OPT {\mathsf{OPT}}
\def \dash {\rightarrow\hspace{1cm}}
\newcommand{\TOP}{\operatorname{\mathsf{TOP}}}
\newcommand{\multiTOP}{\operatorname{\mathsf{Multi-TOP}}}
\newcommand{\rank}{\operatorname{\mathsf{Rank}}}
\newcommand{\multirank}{\operatorname{\mathsf{MultiRank}}}
\newcommand{\cs}{\operatorname{\mathsf{cs}}}
\newcommand{\gcs}{\operatorname{\mathsf{gcs}}}
\newcommand{\co}{\operatorname{\mathsf{co}}}
\newcommand{\argmax}{\operatorname{argmax}}
\newcommand{\aij}{a_{ij}}
\newcommand{\bij}{b_{ij}}
\newcommand{\piij}{\pi_{ij}}
\newcommand{\grad}{\nabla}
\newcommand{\Qplus}{\cQ^{++}}
\newcommand{\convexhull}{\mathsf{ConvHull}}
\newcommand{\boldw}{\mathbf{w}}
\newcommand{\barw}{\mathbf{w}}
\newcommand{\bara}{\mathbf{a}}
\newcommand{\barb}{\mathbf{b}}
\newcommand{\barf}{f}
\newcommand{\barpi}{\pi}
\newcommand{\barS}{S}
\title{Ranking with Multiple Objectives}
\author{Nikhil R. Devanur\thanks{Amazon. Email: \texttt{Iam@nikhildevanur.com}. Work done while the author was at Microsoft Research.} \and Sivakanth Gopi\thanks{Microsoft Research. Email: \texttt{sigopi@microsoft.com}}}
\date{}
\begin{document}

\maketitle

\abstract{
In search and advertisement ranking, it is often required to simultaneously maximize multiple objectives. 
For example, the objectives can correspond to multiple intents of a search query, 
or in the context of advertising, they can be relevance and revenue. It is important to efficiently find rankings which strike a good balance between such objectives. 
Motivated by such applications, we formulate a general class of problems where
\begin{itemize}
	\item each result gets a different score corresponding to each objective, 
	\item the results of a ranking are aggregated by taking, for each objective, a weighted sum of the scores in the order of the ranking, and 
	\item an arbitrary concave function of the aggregates is maximized. 
\end{itemize}
Combining the aggregates using a concave function will naturally lead to more balanced outcomes.
We give an approximation algorithm in a bicriteria/resource augmentation setting: 
the algorithm with a slight advantage does as well as the optimum. 
In particular, if the aggregation step is just the sum of the top $k$ results,  
then the algorithm outputs $k+1$ results which do as well the as the optimal top $k$ results.
We show how this approach helps with balancing different objectives via simulations 
on synthetic data as well as on real data from LinkedIn. 
}

\thispagestyle{empty}
\addtocounter{page}{-1}

\newpage
\section{Introduction} 
We study the problem of ranking with multiple objectives. 
Ranking is an important component of many online platforms such as 
Google, Bing, Facebook, LinkedIn, 
Amazon, Yelp, and so on.
It is quite common that the platform has multiple objectives while choosing a ranking. 
For instance, in a search engine, when someone searches for ``jaguar'', 
it could refer to either the animal or the car company. 
Thus there is one set of results that are relevant for jaguar the animal, 
another for jaguar the car company, and the search engine has to 
pick a ranking to satisfy both intents. 

Another common reason to have multiple objectives is advertising. 
The final ranking produced has \emph{organic} results as well as \emph{ads}, 
and the objectives are relevance and revenue.
Ads contribute to both relevance and revenue, where as organic results 
only contribute to relevance.
While in some cases ads occupy specialized slots,  
it is becoming more common to have floating ads. 
Also, in many cases, the same result can qualify both as an ad 
and as an organic result, and it is not desirable to repeat it. 
In such cases, one has to produce a single ranking of all the results 
(the union of organic results and ads)
that achieves a certain tradeoff between the two objectives. 

The predominant methodology currently used to handle multiple objectives is 
to combine them into one objective using a linear combination \citep{vogt1999fusion}. 
The advantage of this is that it can trace out the entire pareto frontier of the achievable objectives. 
The disadvantage is that you have to choose one linear combination for 
a large number of instances. 
This often results in cases where one objective is favored much more than the others. This is illustrated in Figure \ref{fig:scatter}. To explain this figure we introduce some notation.

Suppose that there are $m$ \emph{instances}, and for each instance there are $n$ \emph{results} 
that are to be ranked. 
Each result $j$ for instance $i$ has two numbers associated with it, 
$\aij$ and $\bij$, that correspond to the two objectives.
Given a ranking $\pi: [n] \rightarrow [n]$, 
we aggregate the two objective values for instance $i$  
using \textit{cumulative scores} defined as 
$$\cs_\bw(\ba_i,\pi)=\sum_{j\in [n]} w_ja_{i\pi(j)}\text{ and }\cs_\bw(\bb_i,\pi)=\sum_{j\in [n]} w_jb_{i\pi(j)},$$ 
for some non-negative weight vector $\bw=(w_1,\cdots,w_n)$  
with (weakly) decreasing weights i.e. $w_1\ge w_2 \ge \dots \ge w_n \ge 0.$\footnote{Throughout the paper, we use the notation policy that $w_i$ is the $i^{\rm th}$ coordinate of $\boldw$. Further, $\ba_i$ is the vector $(a_{i1}, a_{i2}, \cdots, a_{in})$.} 
For example, in advertisement ranking, $w_i$ can represent the click rate i.e. the probability that a user clicks the $i^{\rm th}$ result in the ranking. It is natural that the click rates decrease with the position i.e. it is more probable that a top result is clicked. Suppose $a_j$ represents the revenue generated when $j^{\rm th}$ ad is clicked, and $b_j$ represents the relevance of the $j^{\rm th}$ ad to the user query. Then $\cs_\bw(\ba,\pi)$ represents the expected revenue generated and $\cs_\bw(\bb,\pi)$ represents the expected total relevance for the user when the ads are ranked according to $\pi.$ 
In the figure,  the weight vector is the one used in 
discounted cumulative gain (DCG), and normalized DCG (NDCG) \citep{burges2005learning}, 
which are  standard measures  often used in evaluating search engine rankings. 
This weight vector is: 
\begin{equation}
\label{eqn:weight}
w_i = \frac{1}{\log_2(i+1)} \enspace. 
\end{equation}
We normalize the cumulative scores by the best possible ranking for each objective. 
This is motivated by two things: the resulting numbers are all in $[0,1]$ so they are comparable to each other, 
and how well the ranking did relative to the best achievable one is often the more meaningful measure. 
 We define 
 $$\cs^*_\bw(\ba)=\max_\pi \cs_\bw(\ba,\pi) \text{ and } \cs^*_\bw(\bb)=\max_\pi \cs_\bw(\bb,\pi).$$ 
When the weight vector is the one mentioned above, 
we refer to the normalized cumulative scores as NDCG.
Figure \ref{fig:scatter} shows scatter plots of the NDCGs for the two objectives, 
for different algorithms: a given algorithm produces ranking $\pi_i$ for instance $i$,  and
each dot in the plot is a point 
$$\left(\frac{\cs_\bw(\ba_i,\pi_i)}{\cs^*_\bw(\ba_i)} , \frac{\cs_\bw(\bb_i,\pi_i)}{\cs^*_\bw(\bb_i)}\right).$$

The source of the data is LinkedIn news feed: it is a random sample from one day of results. 
Here we do not go into the details of what the two objectives are, etc.; 
Section \ref{sec:experiments} has more details. 
On the right, we show the result of ranking using the sum of the two scores. 
The triangle shape of the scatter plot is persistent across different samples, 
and different choices of linear combinations. 
What we wish to avoid are the two corners of the triangle
where one of the two NDCGs is rather small. 
Ideally, we like to be at the apex of the triangle which is at the top right corner of the figure. 

On the left, we show the results of our algorithm for the following objective: 
\[ 
\max_\pi \cs_\bw(\ba_i,\pi)\cdot \cs_\bw(\bb_i,\pi) .
\]
The undesirable corners of the triangle have vanished: 
instances where one objective is much smaller than the other are rare if any. 
The points are  closer to the top right corner.  
\begin{figure}
	\centering
	
	\includegraphics[width=.7\textwidth]{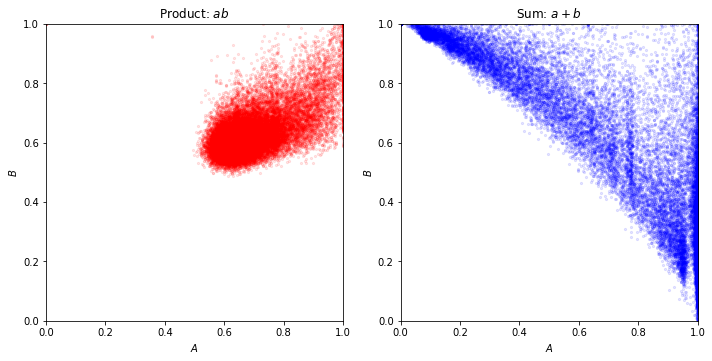}
	\caption{Scatter plot of NDCGs for two different objectives, $A$ and $B$, on real data.}
	\label{fig:scatter} 
\end{figure}

\subsection{Main Results}
The key idea is to combine the two cumulative scores using a \emph{concave} function $f$. (Maximizing the product is the same as maximizing the sum of logs, which is a concave function.)
Concave functions tend to favor more balanced outcomes almost by definition: the function at the average of two points is 
at least as high as the average of the function at the two points, i.e., $f(\tfrac{x_1 + x_2}{2},\tfrac{y_1 + y_2}{2} ) \ge \tfrac{f(x_1,y_1)  + f(x_2,y_2)}{2}$. 
Figure \ref{fig:scatter} is a good demonstration of this. 

We allow arbitrary concave functions that are strictly increasing in each coordinate. 
We define the \textit{combined objective score} of $\pi$ with weights $\bw$ as $$\co_{\bw,f}(\ba,\bb,\pi)=f\left(\cs_\bw(\ba,\pi),\cs_\bw(\bb,\pi)\right).$$
For some objectives, the sum of the cumulative scores 
across different instances is still an important metric, 
e.g., the total revenue, or the number of clicks, etc.  
We allow incorporating such metrics via a global concave function, 
i.e., a concave function of the sum of all the cumulative scores over all the instances. 
Let $F$, and $f_i$ for $i \in [n]$ be concave functions in 2 variables. 
We consider the problem of finding a ranking $\pi_i$ for each $i$ in order to maximize 
\[ 
\sum_{i\in [n]} \co_{\bw,f_i}(\ba_i,\bb_i,\pi_i) + 
F\left( \sum_{i\in [n]}\cs_{\bw}(\ba_i,\pi_i), \sum_{i\in [n]}\cs_{\bw}(\bb_i,\pi_i)  \right). 
\]  

Our main results are polynomial time bi-criteria approximation algorithms for the problem mentioned above. 
{These are similar in spirit to results with resource augmentation in scheduling, or Bulow-Klemperer style results in mechanism design.}  
\begin{itemize}
	\item Consider the special case where we sum the top $k < n $ entries, i.e., the weight vector is 
	$k$ ones followed by all zeros. 
	For this case, we allow the algorithm to sum the top $k+1$ entries, 
	and show that the resulting objective  
	 is at least as good as the optimum for the sum of the top $k$ results. 
	\item For the general case, the algorithm gets an advantage as follows: 
	replace one coordinate of $\bw$, say $w_j$, with the immediately preceding coordinate $w_{j-1}$. For the case of summing the top $k$ entries, this corresponds to replacing the $(k+1)^{\rm st}$ coordinate 
	which is a 0, with the $k^{\rm th}$ coordinate which is a 1.  
	The replacement can be different for different $i$. 
	Again, the ranking output by the algorithm with the new weights does as well as the optimum with the original weights $\bw$. See Theorem~\ref{thm:main_local} and Theorem~\ref{thm:main_global} for formal statements.
	
	One advantage of such a guarantee is that it does \emph{not} depend on the parameters of the convex functions, such as the Lipshitz constants, or the range of values, as is usual with other types of guarantees. This allows greater flexibility in the choice of these convex functions. 
	\item When there is no global function $F$, 
	for each $i$, the algorithm just does a binary search (Proposition~\ref{prop:binary_search_OPT}). 
	In each iteration of the binary search, we compute a ranking optimal for a linear combination of the two objectives. The running time to solve each ranking problem (i.e. each instance $i$) is $O(n\log^2 n)$. 
	In practice, the ranking algorithms are required to be very fast, so this is an important property. 
	For the general case, this is still true, provided that we are given 2 additional parameters that are optimized appropriately. In practice, such parameters are tuned `offline' so we can still use the binary search to rank `online' any new instance $i$. 
\end{itemize}


\subsection{Related Work}

\emph{Rank aggregation} is much studied, most frequently 
in the context of databases with `fuzzy' queries \citep{fagin1997incorporating} and in the context of information retrieval or web search \citep{dwork2001rank,aslam2001models}. 
There are two main categories of results \citep{renda2003web}, 
(i) where the input is a set of rankings \citep{dwork2001rank,renda2003web}, and 
(ii) where the input is a set of scores \citep{vogt1999fusion,fox1994combination}. 
Clearly score based aggregation methods are more powerful, 
since there is strictly more information; 
our paper falls in the score based aggregation category.
 
Among the score based methods,  
\citet{vogt1999fusion} uses the same form of cumulative scores as us, 
and empirically evaluate the usage of a \emph{linear} combination of cumulative scores. 
They identify limitations of this method and conditions under which this does well. 
\citet{fox1994combination} propose and evaluate several methods for combining the scores for different objectives result by result which are then used to rank. 
In contrast, we first aggregate the scores for each objective and 
then combine these cumulative scores. 

\citet{azar2009multiple} also consider rank aggregation motivated by multiple intents in search rankings, but with several differences. 
They consider a large number of different \emph{intents}, as opposed
to this paper where we focus on just 2.  
Their objective function also depends on a weight vector but in a different way. 
For each intent, a result is either `relevant' or not, 
and given a ranking, the cumulative score for that intent is 
the weight corresponding to the highest rank at which a relevant result appears. Their objective is a weighted sum of the cumulative scores across all intents. 

The rank based aggregation methods are closely related to voting schemes and social choice theory, and a lot of this has focused on algorithms to compute the Kemeny-Young rank aggregation \citep{young1978consistent, young1988condorcet,saari1995basic,borda1784memoire}.

\paragraph{Organization:}
In Section \ref{sec:local} we give our binary search based  algorithm for a single instance. 
The general case is presented in Section \ref{sec:global}. 
We present experimental results in Section \ref{sec:experiments}. Appendix contains some missing proofs.

\section{A single instance of ranking with multiple objectives}
\label{sec:local}


Let us formally define the $\rank$ problem.

\begin{definition}[$\rank(\ba,\bb,\bw,f)$]
Given $\ba,\bb,\bw,f$, find a ranking $\pi$ which maximizes the combined objective i.e. find $\co^*_\bw(\ba,\bb)=\max_\pi \co_{\bw}(\ba,\bb,\pi).$
\end{definition}

It is not clear if $\rank(\ba,\bb,\bw,f)$ can be efficiently solved, because it involves optimization over all rankings and there are exponentially many of them. Our main result shows that when $f$ is concave, we can find nearly optimal solutions. We will assume that $f$ is differentiable and has continuous derivatives.
\begin{theorem}
\label{thm:main_local}
Suppose $f(\alpha,\beta)$ is a concave function over the range $\alpha,\beta\ge 0$ and $f$ is strictly increasing in each coordinate in that range. Given an instance of $\rank(\ba,\bb,\bw,f)$, there is an algorithm that runs in $O(n\log^2 n)$ time\footnote{This is a Las Vegas algorithm i.e. always output the correct answer but runs in $O(n\log^2 n)$ time with high probability. We also give $O(n\log n\log B)$ algorithm when $a_i,b_i$ are integers bounded by $B$.} and outputs a ranking $\pi$ of $[n]$ such that $\co_{\bw'}(\ba,\bb,\pi) \ge \co^*_\bw(\ba,\bb)$ where $\bw'=\bw+(w_i-w_{i+1})\be_{i+1}$ for some $i\in [n-1]$. In other words, $\bw'$ is obtained by replacing $w_{i+1}$ with $w_i$ in $\bw$ for some $i\in [n-1]$. 
\end{theorem}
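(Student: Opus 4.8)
The plan is to pass to the geometry of the set of achievable pairs of cumulative scores and exploit concavity together with a simple ``weight inflation'' trick. Write $\cP = \{(\cs_\bw(\ba,\pi),\cs_\bw(\bb,\pi)) : \pi \text{ a ranking}\} \subseteq \R^2$ for the finite set of achievable score pairs, and let $Q = \convexhull(\cP)$. Since $\cP \subseteq Q$ we have $\co^*_\bw(\ba,\bb) = \max_{p\in\cP} f(p) \le \max_{p\in Q} f(p)$, so it suffices to produce a single ranking whose $\bw'$-score matches $\max_{p\in Q} f(p)$. Because $f$ is increasing in each coordinate, this maximum is attained on the upper-right (Pareto) boundary of $Q$, and since that boundary is a concave polygonal chain, the maximizer $p^*$ lies on a single edge, i.e. $p^* = t p_1 + (1-t) p_2$ with $t\in[0,1]$ and $p_1,p_2\in\cP$ the endpoints of that edge (the vertex case being $t\in\{0,1\}$). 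I would first record this reduction; concavity of $f$ makes the maximum over $Q$ well defined and at least $\co^*_\bw$.

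Next I would identify the rankings attaining $p_1$ and $p_2$. The edge has an outward normal $(\lambda,\mu)$ with $\lambda,\mu\ge 0$, and every ranking landing on it maximizes the \emph{linear} objective $\lambda\,\cs_\bw(\ba,\pi)+\mu\,\cs_\bw(\bb,\pi) = \sum_j w_j\,c_{\pi(j)}$ with $c_j = \lambda a_j + \mu b_j$. By the rearrangement inequality (the weights $\bw$ being nonnegative and decreasing), such an objective is maximized by sorting the $c_j$ in decreasing order. As the normal $(\lambda:\mu)$ sweeps from favoring $\ba$ to favoring $\bb$, the optimal sorted order changes exactly when two scores $c_j(\theta)$ cross, and at the crossing those two elements are \emph{adjacent} in the current order; hence the endpoints $p_1,p_2$ of any edge are attained by rankings $\pi_1,\pi_2$ differing by a swap of the elements in some pair of consecutive positions $i,i+1$. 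I would make this precise in general position and then drop the genericity assumption by a perturbation-and-limit argument, using that $\co^*_\bw$ and all the scores are continuous in $(\ba,\bb)$ and that there are only finitely many candidate pairs $(\pi_1,i)$.

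The heart of the argument is the domination step using the inflated weights $\bw' = \bw + (w_i - w_{i+1})\be_{i+1}$ (so $w'_{i+1}=w_i$ and $w'_j=w_j$ otherwise). If $\pi_1$ places elements $u,v$ in positions $i,i+1$ and $\pi_2$ swaps them, then $\cs_{\bw'}(\ba,\pi_1) = S_a + w_i(a_u+a_v)$, where $S_a$ is the contribution of the untouched positions; since $w_i\ge w_{i+1}$ and the scores are nonnegative, this is at least both $\cs_\bw(\ba,\pi_1)$ and $\cs_\bw(\ba,\pi_2)$, and likewise for $\bb$. Thus the single point $q := (\cs_{\bw'}(\ba,\pi_1),\cs_{\bw'}(\bb,\pi_1))$ dominates both $p_1$ and $p_2$ coordinatewise, hence dominates the whole segment and in particular $p^*$. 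Monotonicity of $f$ then yields $\co_{\bw'}(\ba,\bb,\pi_1) = f(q) \ge f(p^*) \ge \co^*_\bw(\ba,\bb)$, which is exactly the claimed guarantee.

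It remains to turn this existence argument into an $O(n\log^2 n)$ algorithm, which is where I expect the real work to be. The maximizer $p^*$ is characterized by the fixed-point condition that the edge normal $(\lambda,\mu)$ equals the gradient direction $\nabla f(p^*)$, and since $f(p(\lambda))$ is unimodal along the Pareto frontier, I would binary search over the slope, at each step sorting by $\lambda a_j + \mu b_j$ in $O(n\log n)$, reading off the current point and comparing $\nabla f$ with the current normal to decide which way to move. This is the content of Proposition~\ref{prop:binary_search_OPT}, which I would invoke for the precise running time; the delicate points are handling ties (degenerate edges), establishing that the search is genuinely monotone, and achieving $O(n\log^2 n)$ rather than a naive bound from searching the $O(n^2)$ critical slopes, hence the Las Vegas guarantee and the deterministic $O(n\log n\log B)$ variant for integer inputs noted in the statement. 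The main obstacle, therefore, is not correctness (which follows from the domination step) but the efficient and robust localization of the optimal edge and the index $i$.
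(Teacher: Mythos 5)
Your argument is correct, and at its core it is the same proof as the paper's, repackaged: the paper reaches the identical structural facts --- that the relaxed optimum is attained at a convex combination of two permutations differing by an adjacent transposition, and that inflating $w_{i+1}$ to $w_i$ produces a single ranking coordinatewise dominating that convex combination --- but it gets there through a convex-programming relaxation over doubly stochastic matrices, Fenchel duality, KKT conditions, and Birkhoff--von Neumann (Proposition~\ref{prop:dual_fixedpq}, Lemma~\ref{lem:atmost_twoties}, Remark~\ref{rmk:one_or_two_permutations}), whereas you work directly with $Q=\convexhull(\cP)$ and its Pareto boundary. These are two descriptions of the same object ($Q$ is the image of the Birkhoff polytope under the score map, so $\max_{p\in Q}f(p)=\OPT$), and your edge-normal/rearrangement analysis is exactly the paper's subgradient analysis of $\cs^*_\bw(p\ba+q\bb)$. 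What your framing buys is a more elementary and self-contained correctness argument; what the paper's duality framing buys is that the dual variables $(p,q)$ and the KKT fixed-point condition $(p^*,q^*)\in\grad f(\partial\cs^*(p^*\ba+q^*\bb))$ are precisely what drive the $O(n\log^2 n)$ binary search of Proposition~\ref{prop:binary_search_OPT} (which you also invoke) and what generalize to the multi-instance problem with a global aggregator in Theorem~\ref{thm:main_global}. Two small points worth making explicit in your write-up: the domination step needs $a_j,b_j\ge 0$ (you note this; it is implicit throughout the paper), and in the vertex case $t\in\{0,1\}$ the output ranking already achieves $\co^*_\bw$ with the original weights, any inflation only helping.
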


We have the following corollary for the important special case where the cumulative scores are the sum of scores of top $k$ elements, i.e., $\bw=(1,\dots,1,0,\dots,0)$ with exactly $k$ ones. In this special case, $\rank$ is called the $\TOP_k$ problem.
\begin{corollary}
	\label{cor:topk_local}
	Given such an instance of $\TOP_k(\ba,\bb,f)$, there is an efficient algorithm that outputs a subset $S\subset [n]$ of at most $k+1$ elements such that $$f\left(\sum_{i\in S}a_i,\sum_{i\in S}b_i\right) \ge \max_{|T|=k} f\left(\sum_{i\in T}a_i,\sum_{i\in T}b_i\right).$$
\end{corollary}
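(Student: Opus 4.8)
The plan is to obtain the corollary as a direct specialization of Theorem~\ref{thm:main_local} to the weight vector $\bw=(1,\dots,1,0,\dots,0)$ with exactly $k$ ones, so that the $\TOP_k$ problem is literally an instance of $\rank(\ba,\bb,\bw,f)$. The first step is to understand what the perturbed weight vector $\bw'$ looks like in this special case. Since $\bw$ is $0/1$ and weakly decreasing, the consecutive gap $w_i-w_{i+1}$ is nonzero for exactly one index, namely $i=k$, where $w_k=1$ and $w_{k+1}=0$. Theorem~\ref{thm:main_local} hands us a vector $\bw'=\bw+(w_i-w_{i+1})\be_{i+1}$ for some $i\in[n-1]$; I would split on whether the chosen index equals $k$. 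If $i\neq k$ the added term vanishes and $\bw'=\bw$ (so we have $k$ ones); if $i=k$ we flip the $(k+1)^{\rm st}$ coordinate from $0$ to $1$ and get $k+1$ leading ones. In every case $\bw'$ is a $0/1$ vector with at most $k+1$ ones.

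The second step is to convert cumulative scores into set sums, which is where a $0/1$ weight vector is especially convenient. Because every nonzero weight of $\bw'$ equals $1$ and the rest are $0$, the quantity $\cs_{\bw'}(\ba,\pi)$ depends only on \emph{which} elements $\pi$ places into the weight-$1$ positions, and not at all on their relative order there. Writing $S$ for that set of elements, we have $\cs_{\bw'}(\ba,\pi)=\sum_{j\in S}a_j$ and $\cs_{\bw'}(\bb,\pi)=\sum_{j\in S}b_j$, hence
\[
\co_{\bw'}(\ba,\bb,\pi)=f\Big(\sum_{j\in S}a_j,\ \sum_{j\in S}b_j\Big),\qquad |S|\le k+1 .
\]
This passage from rankings to subsets is lossless precisely because the objective is order-independent under a $0/1$ weight vector, and it is what lets me read the desired subset $S$ off the ranking $\pi$ returned by the algorithm.

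The third step handles the right-hand side symmetrically: for the original $\bw$ with exactly $k$ ones, as $\pi$ ranges over all rankings the top-$k$ set ranges over all $k$-subsets $T$ of $[n]$, so $\co^*_\bw(\ba,\bb)=\max_{|T|=k} f(\sum_{j\in T}a_j,\sum_{j\in T}b_j)$. Chaining this with the guarantee $\co_{\bw'}(\ba,\bb,\pi)\ge \co^*_\bw(\ba,\bb)$ from Theorem~\ref{thm:main_local} gives exactly the claimed inequality, and the $O(n\log^2 n)$ running time of the theorem supplies the efficiency claim.

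I do not expect a genuine analytic obstacle here, since everything is a specialization of the already-proved theorem; the only point requiring care is the bookkeeping around the degenerate case $\bw'=\bw$. There the selected set has size exactly $k$ rather than $k+1$, and I would make sure the single bound $|S|\le k+1$ uniformly covers both branches, while emphasizing that the order-independence of the combined objective under a $0/1$ weight vector is what makes the reduction from $\pi$ to $S$ valid.
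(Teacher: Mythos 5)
Your proposal is correct and is exactly the argument the paper intends: the corollary is stated as an immediate specialization of Theorem~\ref{thm:main_local} to the weight vector with $k$ ones, where the perturbation $\bw'=\bw+(w_i-w_{i+1})\be_{i+1}$ is either trivial (for $i\neq k$) or turns the $(k{+}1)^{\rm st}$ zero into a one, and the $0/1$ structure makes cumulative scores depend only on the selected set. Your careful handling of the degenerate case $\bw'=\bw$ (giving $|S|=k$ rather than $k+1$) is a detail the paper leaves implicit, but it changes nothing.
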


We will now prove Theorem~\ref{thm:main_global}. We also make the mild assumption that the numbers in $\ba,\bb,\bw$ are generic for the proof, which can be achieved by perturbing all the numbers with a tiny additive noise. In particular we will assume that $w_1>w_2>\dots>w_n>0$. This only perturbs $\co^*_\bw(\ba,\bb)$ by a tiny amount. By a limiting argument, this shouldn't affect the result.
To prove Theorem~\ref{thm:main_local}, we create a convex programming relaxation for $\co^*_\bw(\ba,\bb)$ as shown in~(\ref{eqn:local_primal}) and denote its value by $\OPT$.

\begin{equation}
\label{eqn:local_primal}
\begin{aligned}
\OPT=\max_{\pi_{ij},\alpha,\beta\ge 0}\ & f(\alpha,\beta)&\\
s.t.\ &  \alpha \le \sum_{i,j=1}^n \pi_{ij}w_ia_j   &\dash (p)\\
& \beta  \le \sum_{i,j=1}^n \pi_{ij}w_ib_j  &\dash (q)\\
&\forall i\ \sum_{j=1}^n \pi_{ij} \le 1  &\dash (r_i)\\
&\forall j\ \sum_{i=1}^n \pi_{ij} \le 1  &\dash (c_j)
\end{aligned}
\end{equation}
It is clear that $\OPT$ is a relaxation for $\co^*_\bw(\ba,\bb)$ with $\co^*_\bw(\ba,\bb)\le \OPT.$ By convex programming duality, $\OPT$ can be expressed as a dual minimization problem~(\ref{eqn:local_dual}) by introducing a dual variable for every constraint in the primal as shown in (\ref{eqn:local_primal}). Note that by Slater's condition, strong duality holds here~\cite{BoydV04}. The constraints in the dual correspond to variables in the primal as shown in~(\ref{eqn:local_dual}).
\begin{equation}
\label{eqn:local_dual}
\begin{aligned}
\OPT=\min_{r_i,c_j,p,q\ge 0}\ & \sum_{i=1}^n r_i+\sum_{j=1}^n c_j+f^*(-p,-q)&\\
s.t.\ &\forall i,j\ \ r_i+c_j\ge w_i(pa_j+qb_j) &\dash (\pi_{ij})\\
\end{aligned}
\end{equation}
Here $f^*$ is the Fenchel dual of $f$ defined as $$f^*(\mu,\nu)=\sup_{\alpha,\beta\ge 0} \left(\mu\alpha+\nu \beta +f(\alpha,\beta)\right).$$ Note that $f^*$ is a convex function since it is the supremum of linear functions. 
Since $f(\alpha,\beta)$ is strictly increasing in each coordinate, $f^*(\mu,\nu)=\infty$ unless $\mu,\nu< 0$. Since the dual is a minimization problem, the optimum value is attained only when $\mu,\nu < 0$. Hereafter, wlog, we  assume that $p,q> 0$ in the dual~(\ref{eqn:local_dual}). 
For example when $f(\alpha,\beta)=\log(\alpha\beta)$, 
\[
f^*(\mu,\nu)=
\begin{cases}
-\log(\mu\nu)-2 &\text{if }  \mu,\nu<0\\
\infty &\text{else}.
\end{cases}
\]

If $(\pi^*,\alpha^*,\beta^*)$ is some optimal solution for the primal~(\ref{eqn:local_primal}) and $(\br^*,\bc^*,p^*,q^*)$ is some optimal solution for the dual~(\ref{eqn:local_dual}), then they should together satisfy the KKT conditions given in~(\ref{eqn:KKT_conditions}). A constraint of primal is tight if the corresponding variable in the dual is strictly positive and vice-versa.
\begin{equation}
\label{eqn:KKT_conditions}
\begin{array}{ccl|ccl}
p^*>0 &\Rightarrow& \sum_{ij} \piij^* w_ia_j = \alpha^* & \nabla f(\alpha^*,\beta^*)&=&(p^*,q^*)\\
q^*>0 &\Rightarrow& \sum_{ij} \piij^* w_ib_j =\beta^* & \piij^*>0 &\Rightarrow& r_i^*+c_j^* = w_i(p^*a_j+q^*b_j)\\
r_i^*>0 &\Rightarrow& \sum_j \piij^* =1 & & & \\
c_j^*>0 &\Rightarrow& \sum_i \piij^* =1 & & &
\end{array}
\end{equation}

\begin{proposition}
\label{prop:dual_fixedpq}
Let $p,q> 0$ be fixed. Then the value of the minimization program in~(\ref{eqn:local_dual}) is given by $$\Psi(p,q)=\cs^*_\bw(p\ba+q\bb)+f^*(-p,-q)$$ where $p\ba+q\bb=(pa_1+qb_1,\dots,pa_n+qb_n).$ Moreover the KKT condition~(\ref{eqn:KKT_conditions}) can be simplified to:
\begin{equation}
	\label{eqn:simplified_KKT}
	\begin{aligned}
	&\pi^*\in \convexhull\{\pi: \cs^*_\bw(p^*\ba+q^*\bb)=\cs_\bw(p^*\ba+q^*\bb,\pi)\},\\
	&\alpha^*=\sum_{ij} \piij^*w_ia_j,\ \ \beta^*=\sum_{ij} \piij^*w_ib_j,\\
	&\grad f(\alpha^*,\beta^*)=(p^*,q^*).
\end{aligned}
\end{equation}
\end{proposition}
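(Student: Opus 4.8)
The plan is to treat the two assertions separately, both through linear-programming duality for bipartite matching. First I would fix $p,q>0$ and observe that $f^*(-p,-q)$ is then a constant, so minimizing the dual (\ref{eqn:local_dual}) reduces to the linear program $\min_{\br,\bc\ge 0}\sum_i r_i + \sum_j c_j$ subject to $r_i+c_j \ge w_i(pa_j+qb_j)$ for all $i,j$. This is exactly the LP dual of the fractional assignment problem $\max_{\pi\ge 0}\sum_{ij}\piij w_i(pa_j+qb_j)$ with $\sum_j \piij\le 1$ and $\sum_i \piij\le 1$, so by strong LP duality the two values coincide. The assignment constraint matrix is totally unimodular, hence the maximum is attained at a permutation matrix; and among permutations the rearrangement inequality (using $w_1>\dots>w_n>0$) shows the optimum pairs the largest combined scores $pa_j+qb_j$ with the largest weights, with value precisely $\cs^*_\bw(p\ba+q\bb)$. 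Adding back the constant $f^*(-p,-q)$ gives $\Psi(p,q)$, proving the first claim.

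For the KKT simplification, the equalities $\alpha^*=\sum_{ij}\piij^*w_ia_j$, $\beta^*=\sum_{ij}\piij^*w_ib_j$, and $\grad f(\alpha^*,\beta^*)=(p^*,q^*)$ are immediate from (\ref{eqn:KKT_conditions}) once we use that $p^*,q^*>0$, which has already been argued from the strict monotonicity of $f$. The only substantive statement is the convex-hull membership of $\pi^*$. Here I would read off the remaining complementary-slackness conditions of (\ref{eqn:KKT_conditions}): $\piij^*>0\Rightarrow r_i^*+c_j^*=w_i(p^*a_j+q^*b_j)$, $r_i^*>0\Rightarrow\sum_j\piij^*=1$, and $c_j^*>0\Rightarrow\sum_i\piij^*=1$. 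Together with primal feasibility of $\pi^*$ and dual feasibility of $(\br^*,\bc^*)$, these are exactly the primal-dual optimality conditions for the assignment LP of the first part, now with edge weights $w_i(p^*a_j+q^*b_j)$. Hence $\pi^*$ is an optimal solution of that assignment LP.

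It then remains to identify the optimal solutions of the assignment LP with convex combinations of optimal rankings. The set of optima is a face of the integral bipartite matching polytope, so every optimal $\pi^*$ is a convex combination of the integral vertices lying on that face. Each such vertex is a (partial) permutation matrix optimal for the linear objective; since all $w_i>0$ and, under the genericity assumption, all combined scores $p^*a_j+q^*b_j>0$, any partial matching can be extended to a full permutation without decreasing the objective, so the optimal vertices are full permutations, i.e.\ rankings $\pi$ with $\cs_\bw(p^*\ba+q^*\bb,\pi)=\cs^*_\bw(p^*\ba+q^*\bb)$. This yields the first line of (\ref{eqn:simplified_KKT}).

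I expect the main obstacle to be this last identification, namely cleanly passing from ``$\pi^*$ optimal for the assignment LP'' to ``$\pi^*$ in the convex hull of optimal rankings.'' The care lies in invoking integrality of the matching polytope (Birkhoff--von Neumann) and in using positivity of the weights together with genericity of the scores to exclude partial matchings among the optimal vertices; the rest is bookkeeping with complementary slackness.
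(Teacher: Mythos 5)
Your proposal is correct and follows essentially the same route as the paper: fix $(p,q)$, reduce the dual to an assignment LP, dualize back to the fractional bipartite matching polytope, and invoke Birkhoff--von Neumann to identify the optima with convex combinations of permutations sorting $p\ba+q\bb$, then read the convex-hull condition off complementary slackness. The one place you go beyond the paper is in explicitly handling the fact that the $\le 1$ constraints give the doubly \emph{sub}stochastic polytope, whose vertices are partial permutation matrices; the paper glosses over this, and your fix via nonnegativity of the combined scores is fine.
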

\begin{proof}
For a fixed $p,q> 0$, the dual program (\ref{eqn:local_dual}) reduces (after ignoring the fixed additive term $f^*(-p,-q)$) to the following linear program:
\begin{align*}
\min_{r_i,c_j\ge 0}\ & \sum_{i=1}^n r_i+\sum_{j=1}^n c_j&\\
s.t.\ &\forall i,j\ r_i+c_j\ge w_i(pa_j+qb_j)&\dash (\piij).
\end{align*}
The dual linear program is:
\begin{align*}
\max_{\pi_{ij}\ge 0}\ & \sum_{ij} \pi_{ij} w_i(pa_j+qb_j)\\
&\forall i\ \sum_{j=1}^n \pi_{ij} \le 1  &\dash (r_i)\\
&\forall j\ \sum_{i=1}^n \pi_{ij} \le 1  &\dash (c_j)
\end{align*}

The constraints on $\pi$ correspond to doubly stochastic constraints on the matrix $\pi$. Therefore by the Birkhoff-von Neumann theorem, the feasible solutions are convex combinations of permutations and the optimum is attained at a permutation. An optimal permutation should sort the values of $p\ba+q\bb$ in decreasing order and convex combinations of such permutations are also optimal solutions. Thus the set of solutions $\pi^*\in \convexhull\{\pi: \cs^*_\bw(p^*\ba+q^*\bb)=\cs_\bw(p^*\ba+q^*\bb,\pi)\}$ and the value of both the above programs is $\cs^*_\bw(p\ba+q\bb).$ 


\end{proof}

\begin{lemma}
\label{lem:atmost_twoties}
Fix some $p,q>0$. Then one of the following is true:
\begin{enumerate}
	\item There are no ties among $p\ba+q\bb$ i.e. there is a unique $\pi$ such that $\cs^*_\bw(\ba,\bb)=\cs_\bw(\ba,\bb,\pi)$
	\item There is exactly one tie among $p\ba+q\bb$ i.e. there are exactly two permutations $\pi_1,\pi_2$ such that $\cs^*_\bw(\ba,\bb)=\cs_\bw(\ba,\bb,\pi_1)=\cs_\bw(\ba,\bb,\pi_2)$. Moreover $\pi_1,\pi_2$ differ by an adjacent transposition i.e. $\pi_2$ can be obtained from $\pi_1$ by swapping adjacent elements.
\end{enumerate}
\end{lemma}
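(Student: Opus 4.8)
The plan is to reduce the whole statement to a counting question about \emph{ties} in the single combined vector $\bv := p\ba + q\bb$, whose coordinates are $v_j = pa_j + qb_j$. Because the (perturbed) weights satisfy $w_1 > w_2 > \dots > w_n > 0$, maximizing $\cs_\bw(\bv,\pi)$ over rankings $\pi$ is exactly a strict sorting problem: a ranking is optimal if and only if it lists the coordinates of $\bv$ in non-increasing order. Hence the set of optimal rankings is precisely the set of orderings that sort $\bv$, and its cardinality equals $\prod_{v}(m_v!)$, where $m_v$ is the multiplicity of the value $v$ among the coordinates of $\bv$. Thus all coordinates distinct gives a single optimal ranking (case 1), while the only way to obtain exactly two optimal rankings is to have exactly one value of multiplicity two and all other coordinates distinct.

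The crux is therefore to show that for a fixed $(p,q)$ with $p,q>0$ there can be at most one coincidence $v_j = v_k$. I would observe that $v_j = v_k$ is equivalent to $(p,q)$ lying on the line through the origin $L_{jk} := \{(x,y) : (a_j-a_k)x + (b_j-b_k)y = 0\}$. For generic $\ba,\bb$ these $\binom{n}{2}$ lines are pairwise distinct: two of them coincide only when the difference vectors $(a_j-a_k,b_j-b_k)$ and $(a_l-a_m,b_l-b_m)$ are parallel, a single nontrivial polynomial equation in the entries that an arbitrarily small additive perturbation destroys. Since distinct lines through the origin meet only at the origin and $(p,q)\neq(0,0)$, the point $(p,q)$ lies on at most one $L_{jk}$. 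Consequently at most one pair of coordinates of $\bv$ can be tied; in particular no three coordinates can coincide, since a triple coincidence would force $(p,q)$ onto two distinct lines simultaneously.

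It then remains to locate the tied pair within the sorted order. In case 2 there is a unique pair $\{j,k\}$ with $v_j = v_k$, and every other coordinate takes a distinct value different from $v_j$. When $\bv$ is sorted in decreasing order, these two equal entries must occupy consecutive positions, because no other coordinate lies weakly between them. The two sorting orders therefore differ only by exchanging $j$ and $k$ in those adjacent slots, i.e.\ by a single adjacent transposition, and both yield the same cumulative score and hence are both optimal. This produces exactly the two permutations $\pi_1,\pi_2$ of the statement and establishes the dichotomy.

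The main obstacle I anticipate is the genericity step. One must check that the same small perturbation used to enforce $w_1 > \dots > w_n$ and distinctness can \emph{simultaneously} guarantee that all difference vectors $(a_j-a_k,b_j-b_k)$ are pairwise nonparallel (equivalently, that the lines $L_{jk}$ are distinct). This is a finite union of proper-subvariety (measure-zero) conditions, so a generic additive perturbation avoids all of them at once; the cleanest write-up states explicitly which polynomial nondegeneracy is being assumed, so that the ``at most one tie'' conclusion, and hence the reduction to an adjacent transposition, is unambiguous.
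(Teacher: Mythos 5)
Your proposal is correct and follows essentially the same route as the paper: the paper likewise argues that two distinct ties would impose two linearly independent homogeneous equations on $(p,q)$, forcing $(p,q)=(0,0)$ (your ``distinct lines through the origin meet only at the origin''), and then observes that the unique tied pair must occupy adjacent positions in the sorted order, yielding the adjacent transposition. Your write-up is somewhat more explicit about the genericity condition (pairwise nonparallel difference vectors) and the count of optimal rankings, but the underlying argument is the same.
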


\begin{proof}
$\cs^*(p\ba+q\bb)=p\cs(\ba,\pi)+q\cs(\ba,\pi)$ where $\pi$ is any permutation which sorts $p\ba+q\bb$ in descending order. There are two cases:
	\begin{enumerate}
		\item[Case 1:]
	 If there are no ties among $p\ba+q\bb$, then the permutation $\pi$ is unique.
	 \item[Case 2:] Suppose there are ties among $p\ba+q\bb$. Because we assumed that $\ba,\bb$ are generic, there can be at most one tie among $(pa_1+qb_1,\dots,pa_n+qb_n)$ i.e. there is at most one pair $s,t$ such that $pa_s+qb_s=pa_t+qb_t$. Two such ties  impose two linearly independent equations on $p,q$ forcing them to be both zero. Therefore the there are at most two distinct permutations $\pi_1,\pi_2$ such that $\cs^*(p\ba+q\bb)=p\cs(\ba,\pi_1)+q\cs(\ba,\pi_1)=p\cs(\ba,\pi_2)+q\cs(\ba,\pi_2)$. Moreover $s,t$ should be next to each other in $\pi_1,\pi_2$ and their order is switched in $\pi_1,\pi_2$. Therefore they differ by an adjacent transposition.
	\end{enumerate}
\end{proof}

\begin{remark}
	\label{rmk:one_or_two_permutations}
	From  Proposition~\ref{prop:dual_fixedpq} and Lemma~\ref{lem:atmost_twoties}, the solution $\pi^*$ for the primal program~(\ref{eqn:local_primal}) is either a single permutation or a convex combination of two permutations which differ only by an adjacent transposition (i.e. swapping two elements next to each other).
\end{remark}
 
\begin{remark}
\label{prop:gradient_descent_OPT}
$\Psi(p,q)=\cs^*_\bw(p\ba+q\bb)+f^*(-p,-q)$ is a convex function. The gradient\footnote{or a subgradient at points where $\Psi$ is not differentiable} of $\Psi$ can be calculated efficiently and therefore $\OPT=\min_{p,q\ge  0} \Psi(p,q)$ can be found efficiently using gradient (or subgradient) descent~\cite{Bubeck15}.
\end{remark}
It turns out that there is a much more efficient algorithm to find $\OPT$ using binary search. We  need the notion of a subgradient. For a convex function $g:\R^d \to \R$, the subgradient of $g$ at a point $x\in \R^d$ is defined as $\partial g(x)=\{v: g(x+y)\ge g(x)+\inpro{v}{y} \forall y\}$. It is always a convex subset of $\R^d$. If $g$ is differentiable at $x$ then, $\partial g(x)=\{\grad g(x)\}$.

\begin{proposition}[Binary search to find $\OPT$]
\label{prop:binary_search_OPT}
Suppose $a_1,\dots,a_n$ and $b_1,\dots,b_n$ are integers bounded by $B$ in absolute value. We can solve the primal program (\ref{eqn:local_primal}), the dual program (\ref{eqn:local_dual}) and find $\OPT=\min_{p,q>0} \cs^*_\bw(p\ba+q\bb)+f^*(-p,-q)$ in $O(n\log n\log B)$ time. Moreover there is a strongly polynomial randomized algorithm which runs in $O(n\log^2 n)$ time.\footnote{Strongly polynomial refers to the fact that the running time is independent of $B$ or the actual numbers in $\ba,\bb$. In this model, it is assumed that arithmetic and comparison operations between $a_i$'s and $b_i$'s take constant time.} 
\end{proposition}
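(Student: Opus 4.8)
The plan is to reduce the two-dimensional minimization of $\Psi$ to a one-dimensional search over the \emph{direction} of $(p,q)$ and to solve that search by binary search. By Proposition~\ref{prop:dual_fixedpq} and the convexity noted in Remark~\ref{prop:gradient_descent_OPT}, we must minimize $\Psi(p,q)=\cs^*_\bw(p\ba+q\bb)+f^*(-p,-q)$ over $p,q>0$. The term $\cs^*_\bw(p\ba+q\bb)$ depends on $(p,q)$ only through the sorted order of $p\ba+q\bb$, which in turn depends only on the direction $\theta\in(0,\pi/2)$ with $(p,q)\propto(\cos\theta,\sin\theta)$. Geometrically, letting $P$ be the convex hull of the achievable pairs $(\cs_\bw(\ba,\pi),\cs_\bw(\bb,\pi))$, the quantity $\cs^*_\bw(p\ba+q\bb)$ is the support function of $P$ and $\OPT=\max_{(\alpha,\beta)\in P}f(\alpha,\beta)$. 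Since $f$ is increasing in each coordinate, the maximizer lies on the upper-right (Pareto) boundary of $P$, a concave chain whose edges, by Lemma~\ref{lem:atmost_twoties}, are exactly the adjacent transpositions; I would search this boundary for the point maximizing $f$.

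To justify binary search I would first establish unimodality. Parametrize the Pareto boundary as $\beta=\phi(\alpha)$ with $\phi$ decreasing and concave, and set $F(\alpha)=f(\alpha,\phi(\alpha))$. A short computation gives $F''=\big(f_{\alpha\alpha}+2f_{\alpha\beta}\phi'+f_{\beta\beta}(\phi')^2\big)+f_\beta\,\phi''$, where the bracket is the Hessian of $f$ evaluated in the direction $(1,\phi')$ and is $\le 0$ by concavity of $f$, while $f_\beta>0$ and $\phi''\le 0$; hence $F''\le 0$ and $F$ is concave, so it is unimodal along the boundary. For a query direction $\theta$ I would sort $p\ba+q\bb$ in $O(n\log n)$ to obtain the maximizing vertex $v(\theta)=(\alpha,\beta)$, and compare $\theta$ with the angle $\theta'$ of $\grad f(v(\theta))$. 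Along the supporting edge one has $\phi'=-\cot\theta$, so $F'(\alpha)=f_\alpha-f_\beta\cot\theta=f_\beta(\cot\theta'-\cot\theta)$, which is positive exactly when $\theta'<\theta$. Thus $\theta'<\theta$ certifies that the optimum lies at a smaller direction and $\theta'>\theta$ at a larger one, giving a correct and (by unimodality) consistent decision rule; its fixed point $\theta'=\theta$ is precisely the KKT condition $\grad f(\alpha^*,\beta^*)=(p^*,q^*)$ from~(\ref{eqn:simplified_KKT}).

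It remains to turn this decision rule into the two claimed running times. When the $a_i,b_i$ are integers bounded by $B$, the directions at which the sorted order changes occur at values $t=q/p$ equal to ratios of integer differences bounded by $2B$ in absolute value, hence any two distinct such critical values differ by at least $1/(2B)^2$; therefore $O(\log B)$ steps of binary search on $t$ isolate the optimal edge (equivalently, the sorted order), each step costing one sort, for $O(n\log n\log B)$ total. On the isolated edge I would recover $(\alpha^*,\beta^*)$, $(p^*,q^*)=\grad f(\alpha^*,\beta^*)$ and $\OPT$ by solving the one-dimensional equation that makes $\grad f$ parallel to the edge normal (or by reading off the vertex when the optimum is a vertex). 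For the strongly polynomial bound I would binary search directly over the discrete set of $O(n^2)$ critical directions by a randomized slope-selection strategy: maintain an interval known to contain $\theta^*$, count the critical directions inside it by an inversion count in $O(n\log n)$, pick a uniformly random critical direction in the interval as pivot, apply the monotone test, and recurse into the correct half. Each round shrinks the enclosed count by a constant factor in expectation, so $O(\log n)$ rounds suffice, giving an $O(n\log^2 n)$ expected-time Las Vegas algorithm.

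The main obstacle is the second step: proving that the angle comparison is a valid global decision rule, which rests on the concavity of $F$ along the Pareto boundary and hence on combining the concavity of $f$ with the concavity of the boundary. The remaining subtlety is realizing the strongly polynomial guarantee, where one must binary search over the implicit $\Theta(n^2)$ critical directions without ever enumerating them, which is what forces the randomized selection (inversion-counting) subroutine rather than a plain binary search on a real parameter.
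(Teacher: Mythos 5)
Your proposal is correct and is essentially the paper's own algorithm: the same reduction to a one-dimensional search over the ratio $q/p$, the same decision rule (comparing the direction of $\grad f$ at the current optimal vertex against the current cone, which is the paper's function $\phi$), the same $1/O(B^2)$ separation bound giving $O(\log B)$ sorting rounds, and the same randomized inversion-counting/sampling trick for the strongly polynomial bound. The only difference is presentational: you certify correctness in the primal, via concavity (hence unimodality) of $f$ restricted to the Pareto boundary of the permutation polytope, whereas the paper argues in the dual, via convexity of $\Psi$ and a sign change of the subgradient fixed-point test; these are Fenchel-dual views of the same fact, and your angle-comparison identity $F'(\alpha)=f_\beta(\cot\theta'-\cot\theta)$ is exactly the paper's clockwise/anticlockwise test.
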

\begin{proof}
To solve the primal program (\ref{eqn:local_primal}) and the dual program (\ref{eqn:local_dual}), it is enough to find $(\pi^*,\alpha^*,\beta^*)$ and $(p^*,q^*)$ which together satisfy all the simplified KKT conditions~(\ref{eqn:simplified_KKT}).

Throughout the proof, we  drop the subscript $\bw$ from $\cs_\bw$ for brevity.
	$\Psi(p,q)=\cs(p\ba+q\bb)+f^*(-p,-q)$ is a convex function.  So a local minimum is a global minimum and therefore it is enough to find $(p^*,q^*)$ such that $0\in \partial \Psi(p^*,q^*)$.
	$$0\in \partial \Psi(p^*,q^*) \iff \grad f^*(-p^*,-q^*) \in \partial \cs^*(p^*\ba+q^*\bb).$$ It is easy to see that $\grad f^*(-p,-q)=(\alpha,\beta)\iff \grad f(\alpha,\beta)=(p,q)$. Therefore we  rewrite the optimality condition for $(p^*,q^*)$ as:
	\begin{equation}
	\label{eqn:fixed_point}
		 (p^*,q^*)\in \grad f\left(\partial \cs^*(p^*\ba+q^*\bb)\right).
	\end{equation}
	  Thus we need to find a fixed point for a set-valued map.
	We  begin by calculating the subgradient $\partial \cs^*(p\ba+q\bb)$. 
	Note that 
	$\cs^*(p\ba+q\bb)=p\cs(\ba,\pi)+q\cs(\bb,\pi)$ where $\pi$ is any permutation which sorts $p\ba+q\bb$ in descending order. By Lemma~\ref{lem:atmost_twoties}, there are two cases:
	\begin{enumerate}
		\item[Case 1:]
	 If there are no ties among $p\ba+q\bb$, then the permutation $\pi$ is unique and $\cs^*(p\ba+q\bb)$ is differentiable at $(p,q)$ and $$\partial \cs^*(p\ba+q\bb)=\{\left(\cs(\ba,\pi),\cs(\bb,\pi)\right)\}.$$ 
	 \item[Case 2:] Suppose there are ties among $p\ba+q\bb$. Then there exists exactly two permutations $\pi_1,\pi_2$ such that $\cs^*(p\ba+q\bb)=\cs(p\ba+q\bb,\pi_1)=\cs(p\ba+q\bb,\pi_2)$. In this case, 
$$\partial \cs^*(p\ba+q\bb)=\left\{\mu\left(\cs(\ba,\pi_1),\cs(\bb,\pi_1)\right)+(1-\mu)\left(\cs(\ba,\pi_2),\cs(\bb,\pi_2)\right):\mu\in [0,1]\right\}.$$
	\end{enumerate}

	We  make a few observations. The value of the subgradient $\partial \cs^*(p\ba+q\bb)$ only depends on the ratio of $p$ and $q$, $\lambda=q/p$; this is because the optimal ranking of $p\ba+q\bb$ only depends on the ratio $\lambda$. And as we change this ratio $\lambda$ from $0$ to $\infty$, the subgradient changes at most $\binom{n}{2}$ times. This happens whenever $\lambda$ is such that $a_i+\lambda b_i=a_j+\lambda b_j$ for some $i\ne j$. We call the set $$C=\left\{\frac{a_i-a_j}{b_j-b_i}:1\le i<j\le n,\frac{a_i-a_j}{b_j-b_i}>0\right\},$$ the critical set of $\lambda$'s where the subgradient changes value\footnote{$|C|$ is equal to the number of inversions in $\ba$ w.r.t. $\bb$, also called the Kendall tau distance.}. Let $m=|C|$ and let $\lambda_1<\lambda_2<\dots<\lambda_m$ be an ordering of the critical set $C$, further define $\lambda_0=0$ and $\lambda_{m+1}=\infty$. 
	\begin{figure}[H]
	\caption{The positive quadrant $\Qplus$ is divided into regions $A_i$ and rays $R_i$ based on the values of the subgradient $\partial \cs^*(p\ba+q\bb)$.}
	\label{fig:Regions}
	\centering
	\includegraphics[scale=0.4]{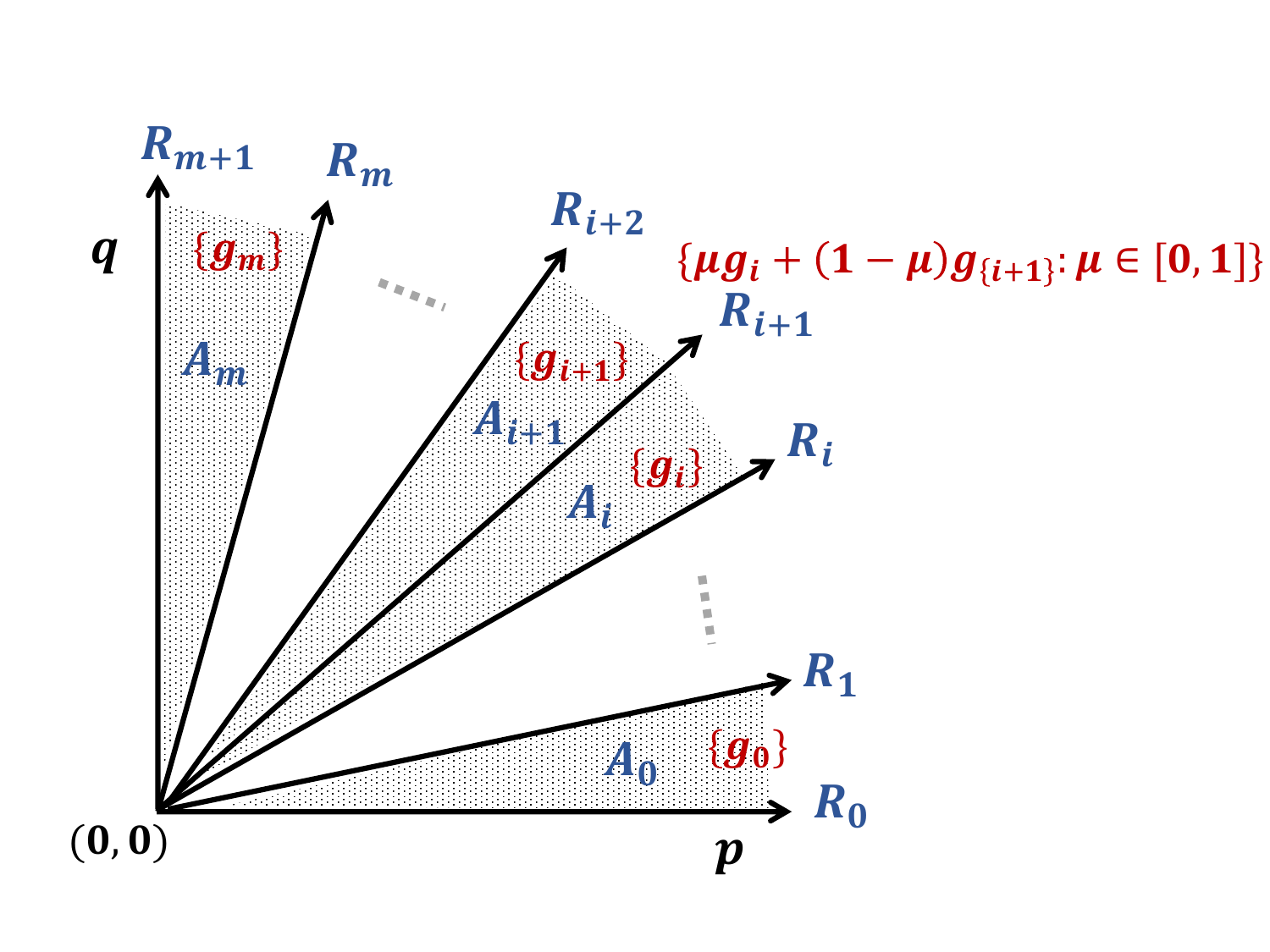}
	\end{figure}
	Define the regions 
	$$A_i=\left\{(p,p\lambda): p>0, \lambda_{i} < \lambda < \lambda_{i+1}\right\}.$$ 
	Also define the rays 
	$$R_i=\left\{(p,p\lambda_i): p>0\right\}.$$ 
	In the region $A_i,$ there is a unique permutation $\sigma_i=\argmax_\pi \cs(p\ba+q\bb,\pi)$,
	therefore the subgradient is unique.
	Denote its value by $g_i$ i.e. 
	$$\partial \cs^*(p\ba+q\bb)|_{A_i}=\{g_i\}=\{(\cs(\ba,\sigma_i),\cs(\bb,\sigma_i))\}.$$ 
	On the ray $R_{i+1},$ we have $\{\sigma_i,\sigma_{i+1}\}=\argmax_\pi \cs(p\ba+q\bb,\pi)$, 
	therefore the subgradient is given by 
	$$\partial \cs^*(p\ba+q\bb)|_{R_{i+1}}=\{\mu g_i+(1-\mu)g_{i+1}: \mu\in[0,1]\}.$$
	Figure~\ref{fig:Regions} shows the values of the subgradient $\partial \cs^*(p\ba+q\bb)$ as a function of $(p,q)$ in the regions $A_i,R_i.$

	Let $\Qplus=\{(p,q):p,q>0\}$ be the positive quadrant. 
	Since $f(\alpha,\beta)$ is strictly increasing in each coordinate in $\Qplus$, 
	the gradients also lie in the positive quadrant, i.e., $\grad f:\Qplus \to \Qplus$.
	 We now define a function $\phi:\{0,1,\dots,m\}\to \{-1,0,1\}$ as follows:
	\begin{equation}
	\label{eqn:phi_binarysearch}
	\begin{aligned}
		\phi(i)=
		\begin{cases}
			0 &\text{ if } \grad f(g_i)\in R_i\cup A_i \cup R_{i+1}\\
			+1 &\text{ if } \grad f(g_i) \text{ lies anticlockwise to } R_i \cup A_i \cup R_{i+1}\\
			-1 &\text{ if } \grad f(g_i) \text{ lies clockwise to } R_i \cup A_i \cup R_{i+1}.
		\end{cases}
	\end{aligned}
	\end{equation}

We  now show that it is enough to find some $i\in \{0,1,\dots,m\}$ such that one of the following is true.
\begin{enumerate}
	\item $\phi(i)=0$. In this case, we  set $(p^*,q^*)=\grad f(g_i).$ The condition $\phi(i)=0$ implies that $(p^*,q^*)\in R_i \cup A_i \cup R_{i+1}.$ Therefore $g_i \in \partial \cs^*(p^*\ba+q^*\bb).$ Applying $\grad f$ on both sides implies that $(p^*,q^*)=\grad f(g_i) \in \grad f(\partial \cs^*(p^*\ba + q^* \bb))$ which is the fixed point condition~(\ref{eqn:fixed_point}). Moreover setting $\pi^*=\sigma_i$ and $(\alpha^*,\beta^*)=g_i$ gives a solution to the simplified KKT conditions~(\ref{eqn:simplified_KKT}). 
	\item $\phi(i)=1, \phi(i+1)=-1$. In this case, we claim that there exists some $\mu^*\in (0,1)$ such that $\grad f(\mu^* g_i +(1-\mu^*) g_{i+1})\in R_{i+1}$. This is because the curve $\gamma:[0,1]\to \Qplus$ given by $\gamma(\mu)=\grad f(\mu g_i +(1-\mu) g_{i+1})$ starts and ends in opposite sides of the ray $R_{i+1}$ as shown in Figure~\ref{fig:Crossing},  so it should cross it at some point $\mu^*\in (0,1)$ which can be found by binary search (here we need continuity of $\grad f$). We  then set $(p^*,q^*)=\grad f (\mu^* g_i +(1-\mu^*) g_{i+1})$. Since $(p^*,q^*)\in R_{i+1}$, $\mu^* g_i +(1-\mu^*) g_{i+1} \in \partial \cs^*(p^*\ba+q^*\bb)$. Applying $\grad f$ to both sides, we get $(p^*,q^*)\in \grad f(\partial \cs^*(p^*\ba+q^*\bb))$ which is the fixed point condition~(\ref{eqn:fixed_point}). Setting $\pi^*=\mu^* \sigma_i + (1-\mu^*)\sigma_{i+1}$ and $(\alpha^*,\beta^*)=\mu^* g_i +(1-\mu^*)g_{i+1}$ gives a solution to the simplified KKT conditions~(\ref{eqn:simplified_KKT}). 
	\begin{figure}[H]
	\caption{The endpoints of the curve $\gamma$ lie on opposite sides of the ray $R_{i+1}$  so $\gamma$ should cross the ray $R_{i+1}$ at some point.}
	\label{fig:Crossing}
	\includegraphics[scale=0.4]{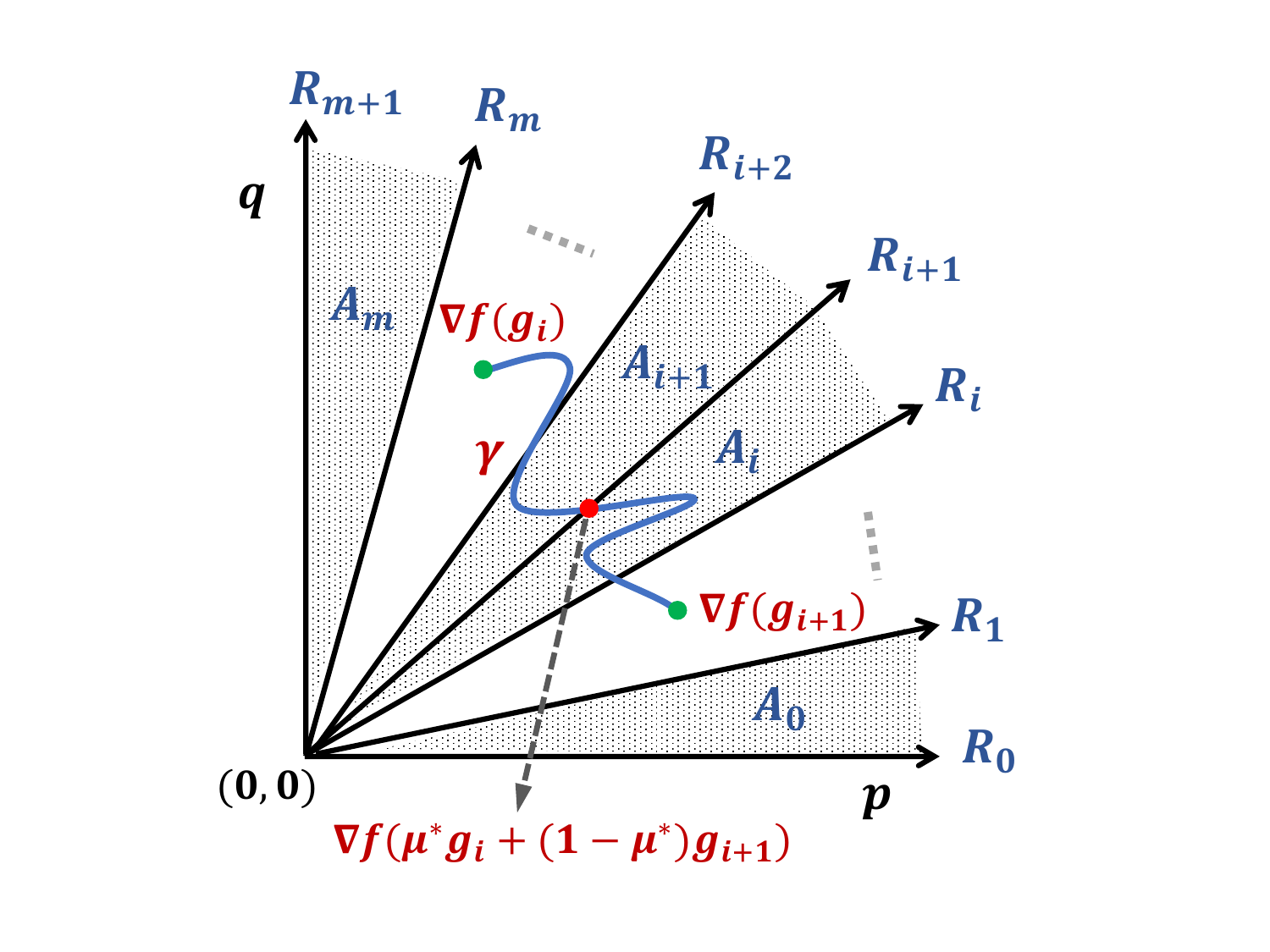}
	\centering
	\end{figure}
\end{enumerate}

Now if either $\phi(0)=0$ or $\phi(m)=0$, we are done. Otherwise, $\phi(0)=1$ and $\phi(m)=-1$. By a simple binary search on $\{0,1,\dots,m\}$ we can find a point $i$ such that either $\phi(i)=0$ or $\phi(i)=1, \phi(i+1)=-1$.

A naive implementation of the above described binary search will require finding the set of critical values $C$ and sorting them. This can take $\Omega(n^2\log n)$ time. To improve this to near-linear time, we need a way do binary search on $C$ without listing all values in $C$. See Appendix~\ref{sec:running_time_binarysearch_local} for how to achieve this.
\end{proof}

We are now ready to prove Theorem~\ref{thm:main_local}.
\begin{proof}[Proof of Theorem~\ref{thm:main_local}]
By Proposition~\ref{prop:binary_search_OPT} and Remark \ref{rmk:one_or_two_permutations}, we  can find a solution $(\pi^*,\alpha^*,\beta^*)$ to the primal program~(\ref{eqn:local_primal}) where $\pi^*$ is either a permutation or a convex combination of two permutations which differ by an adjacent transposition. If $\pi^*$ is a permutation then $$\co^*_\bw(\ba,\bb)\le \OPT=f(\cs_\bw(\ba,\pi^*),\cs_\bw(\bb,\pi^*)).$$ Thus we  just output $\pi^*$ which is the optimal ranking.

If $\pi^*=\mu \pi_1 + (1-\mu) \pi_2$ i.e. a convex combination of $\pi_1,\pi_2$ which differ in the $i,i+1$ positions, then 
\begin{align*}
\co^*_\bw(\ba,\bb)\le \OPT&=f(\mu(\cs_\bw(\ba,\pi_1),\cs_\bw(\bb,\pi_1))+(1-\mu)(\cs_\bw(\ba,\pi_2),\cs_\bw(\bb,\pi_2))\\ &\le f((\cs_{\bw'}(\ba,\pi_1),\cs_{\bw'}(\bb,\pi_1)))=\co_{\bw'}(\ba,\bb,\pi_1)
\end{align*}
 where $\bw'=\bw+(w_i-w_{i+1})\be_{i+1}$. Similarly $\co^*_\bw(\ba,\bb)\le \co^*_{\bw'}(\ba,\bb,\pi_2).$ So in this case we  output either $\pi_1$ or $\pi_2$, whichever has the higher combined objective.

\end{proof}

\section{Multiple instances of ranking with global aggregation}
\label{sec:global}
Suppose we have several instances of $\rank$ and we want to do well locally in each problem, but we also want to do well when we aggregate our solutions globally. Such a situation arises in the example application discussed in the beginning. For each user we want to solve an instance of $\rank$, but globally across all the users, we want that the average revenue and average relevance are high as well. We model this as follows.

Let $(\ba_1,\bb_1,\bw_1,f_1),\dots,(\ba_m,\bb_m,\bw_m,f_m)$ be $m$ instances of $\rank$ on sequences of length $n$.  Let $\barw=(\bw_1,\dots,\bw_m),\ ,\barf=(f_1,\dots,f_m),\ \bara=(\ba_1,\dots,\ba_m),\ \barb=(\bb_1,\dots,\bb_m)$ and let $\barpi=(\pi_1,\dots,\pi_m)$ be a sequence of rankings of $[n]$. The global cumulative a-score and b-score of $\barpi$ is defined as $$\gcs_{\barw}(\bara,\barpi)=\sum_{i=1}^m \cs_{\bw_i}(\ba_i,\pi_i)\text{ and } \gcs_{\barw}(\barb,\pi)=\sum_{i=1}^m \cs_{\bw_i}(\bb_i,\pi_i)$$ respectively. Suppose $F(\alpha,\beta)$ is a concave function increasing in each coordinate. The combined objective function is defined as $$\co_{\barw}(\bara,\barb,\barpi)=F(\gcs_{\barw}(\bara,\barpi),\gcs_{\barw}(\barb,\barpi))+\sum_{i=1}^m f_i(\cs_{\bw_i}(\ba_i,\pi_i),\cs_{\bw_i}(\bb_i,\pi_i)).$$

\begin{definition}[$\multirank(\bara,\barb,\barw,\barf,F)$]
Given $\bara,\barb,\barf,F$, find a sequence of rankings $\barpi=(\pi_1,\dots,\pi_m)$ of $[n]$ which maximizes $\co_{\barw}(\bara,\barb,\barpi)$ i.e. find $\co^*_{\barw}(\bara,\barb)=\max_{\barpi} \co_{\barw}(\bara,\barb,\barpi)$
\end{definition}

We will assume that the functions $f_1,\dots,f_m,F$ are concave and strictly increasing in each coordinate, and that they are differentiable with continuous derivatives. Our main theorem is that we can efficiently find a sequence of rankings $\barpi$ which, with a slight advantage, does as well as the optimal sequence of rankings.

\begin{theorem}
\label{thm:main_global}
Suppose the functions $f_1,\dots,f_m,F$ are concave and strictly increasing in each coordinate.
Given an instance of $\multirank(\bara,\barb,\barw,\barf,F)$, we can efficiently find a sequence of rankings $\barpi=(\pi_1,\dots,\pi_m)$ such that $$\co_{\barw'}(\bara,\barb,\barpi)\ge \co^*_{\barw}(\bara,\barb)$$ where $\barw'=(\bw_1',\dots,\bw_m')$ and $\bw_i'=\bw_i+(w_{t_i}-w_{{t_i}+1})\be_{{t_i}+1}$ for some $t_i\in [n-1]$ i.e., each $\bw'_i$ is obtained by replacing $w_{t_i+1}$ with $w_{t_i}$.
\end{theorem}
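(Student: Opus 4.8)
The plan is to lift the convex-programming relaxation and duality argument behind Theorem~\ref{thm:main_local} to the global setting, and then round each instance essentially independently. First I would write down the relaxation
$$\OPT=\max\ F(A,B)+\sum_{\ell=1}^m f_\ell(\alpha_\ell,\beta_\ell)$$
over one doubly-stochastic matrix $\pi^{(\ell)}$ per instance together with auxiliary scalars $A,B,\alpha_\ell,\beta_\ell\ge 0$, subject to the local score constraints $\alpha_\ell\le \sum_{ij}\pi^{(\ell)}_{ij}w_{\ell,i}a_{\ell,j}$ and $\beta_\ell\le \sum_{ij}\pi^{(\ell)}_{ij}w_{\ell,i}b_{\ell,j}$, the global score constraints $A\le \sum_\ell\sum_{ij}\pi^{(\ell)}_{ij}w_{\ell,i}a_{\ell,j}$ and $B\le \sum_\ell\sum_{ij}\pi^{(\ell)}_{ij}w_{\ell,i}b_{\ell,j}$, and the row/column constraints on each $\pi^{(\ell)}$. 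The deliberate design choice is that the global and local constraints reuse the \emph{same} bilinear expressions; combined with strict monotonicity of $F$ and the $f_\ell$, complementary slackness will force all these constraints tight, so that at optimality $A^*=\sum_\ell\alpha_\ell^*$ and $B^*=\sum_\ell\beta_\ell^*$.

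Next I would form the Lagrangian dual, using $P,Q$ for the global constraints and $p_\ell,q_\ell$ for the local ones. The coefficient of $\pi^{(\ell)}_{ij}$ collects both contributions, so dual feasibility reads $r_{\ell,i}+c_{\ell,j}\ge w_{\ell,i}\bigl((P+p_\ell)a_{\ell,j}+(Q+q_\ell)b_{\ell,j}\bigr)$. Thus, with the duals fixed, the inner problem \emph{decouples across instances}, and instance $\ell$ becomes exactly the single-instance problem of Proposition~\ref{prop:dual_fixedpq} with effective combined vector $(P+p_\ell)\ba_\ell+(Q+q_\ell)\bb_\ell$. Applying Lemma~\ref{lem:atmost_twoties} instance by instance, each optimal $\pi^{(\ell)*}$ is either a single permutation or a convex combination $\mu_\ell\pi_{\ell,1}+(1-\mu_\ell)\pi_{\ell,2}$ of two permutations differing by one adjacent transposition at positions $t_\ell,t_\ell+1$; the KKT stationarity conditions additionally give $\grad F(A^*,B^*)=(P^*,Q^*)$ and $\grad f_\ell(\alpha_\ell^*,\beta_\ell^*)=(p_\ell^*,q_\ell^*)$, which (being positive) justify the tightness claimed above.

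The rounding step is the crux, since the global function $F$ couples the instances. For each two-permutation instance I would replace $\bw_\ell$ by the augmented weight $\bw_\ell'=\bw_\ell+(w_{t_\ell}-w_{t_\ell+1})\be_{t_\ell+1}$ and output either $\pi_{\ell,1}$ or $\pi_{\ell,2}$. The key observation, already implicit in the proof of Theorem~\ref{thm:main_local}, is that the resulting domination is \emph{coordinate-wise}: using $w_{t_\ell}\ge w_{t_\ell+1}$ and nonnegativity of the scores, both $\cs_{\bw_\ell'}(\ba_\ell,\pi_{\ell,1})\ge\alpha_\ell^*$ and $\cs_{\bw_\ell'}(\bb_\ell,\pi_{\ell,1})\ge\beta_\ell^*$ (and likewise for $\pi_{\ell,2}$, which in fact ties $\pi_{\ell,1}$ under $\bw_\ell'$). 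Because each rounded $a$- and $b$-score dominates its fractional counterpart, summing over $\ell$ gives $\gcs_{\barw'}(\bara,\barpi)\ge\sum_\ell\alpha_\ell^*=A^*$ and $\gcs_{\barw'}(\barb,\barpi)\ge B^*$. Since $F$ and every $f_\ell$ are increasing in each coordinate, neither the global term nor any local term decreases, so $\co_{\barw'}(\bara,\barb,\barpi)\ge F(A^*,B^*)+\sum_\ell f_\ell(\alpha_\ell^*,\beta_\ell^*)=\OPT\ge\co_{\barw}^*(\bara,\barb)$, as claimed.

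For the efficiency claim I would note that the relaxation is convex with an explicitly computable (sub)gradient as in Remark~\ref{prop:gradient_descent_OPT}, so an optimal primal-dual pair exists and can be found; in the spirit of Proposition~\ref{prop:binary_search_OPT}, one fixes the two global duals $(P,Q)$ — the ``$2$ additional parameters'' — after which the per-instance subproblems decouple and each is solved by the same binary search as in the single-instance case, with an outer optimization over $(P,Q)$. The main obstacle is not the decoupling or the running time but guaranteeing that the rounding is simultaneously valid for $F$ and for all the $f_\ell$; this is precisely what the coordinate-wise domination secures, since it lets us round every instance independently without ever decreasing an argument of $F$ or of any $f_\ell$.
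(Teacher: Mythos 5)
Your proposal is correct and follows essentially the same route as the paper: the global convex relaxation with shared bilinear expressions for the local and global score constraints, dualization so the inner problem decouples per instance into the single-instance structure of Proposition~\ref{prop:dual_fixedpq} with effective vector $(P+p_\ell)\ba_\ell+(Q+q_\ell)\bb_\ell$, the at-most-two-permutations structure from Lemma~\ref{lem:atmost_twoties}, and rounding each fractional $\pi^{(\ell)*}$ to one of its two adjacent-transposition permutations under the augmented weight $\bw_\ell'$. In fact your write-up makes explicit the coordinate-wise domination step (that $\cs_{\bw_\ell'}(\cdot,\pi_{\ell,1})$ dominates the convex combination of scores, so monotonicity of $F$ and the $f_\ell$ lets every instance be rounded independently) that the paper's own proof of Theorem~\ref{thm:main_global} leaves implicit.
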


In the special case when the weight vectors $\bw_1,\dots,\bw_m$ are just a sequence of $k$ ones followed by zeros i.e. the cumulative scores are calculated by adding the scores of top $k$ results, $\multirank$ is called the $\multiTOP_k$ problem.

\begin{corollary}
\label{cor:main_global_topk}
Suppose the functions $f_1,\dots,f_m,F$ are concave and strictly increasing in each coordinate.
Given an instance of $\multiTOP_k(\bara,\barb,\barf,F)$, we can efficiently find a sequence $\barS=(S_1,\dots,S_m)$ of subsets of $[n]$ of size at most $k+1$ (corresponding to the top $k+1$ elements) such that
$$\co(\bara,\barb,\barS)\ge \max_{|T_i|=k}\co(\bara,\barb,(T_1,\dots,T_m)).$$
\end{corollary}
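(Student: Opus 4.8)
The plan is to deduce the corollary directly from Theorem~\ref{thm:main_global} by specializing the weights to top-$k$ indicators and tracking how the weight-modification acts on them. First I would record the correspondence between rankings and subsets in the $\TOP_k$ regime. Here every weight vector is $\bw_i=(1,\dots,1,0,\dots,0)$ with exactly $k$ ones, so $w_1=\cdots=w_k=1$ and $w_{k+1}=\cdots=w_n=0$. With such weights $\cs_{\bw_i}(\ba_i,\pi_i)=\sum_{j=1}^k a_{i\pi_i(j)}$ depends only on which $k$ elements occupy the top $k$ positions, not on their internal order, and likewise for the $b$-scores. Consequently, maximizing $\co_{\barw}(\bara,\barb,\barpi)$ over rankings with these weights coincides with maximizing $\co$ over sequences of size-$k$ subsets, so $\co^*_{\barw}(\bara,\barb)=\max_{|T_i|=k}\co(\bara,\barb,(T_1,\dots,T_m))$, which is exactly the right-hand side of the corollary.

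Next I would apply Theorem~\ref{thm:main_global} to this instance, obtaining rankings $\barpi=(\pi_1,\dots,\pi_m)$ and modified weights $\bw_i'=\bw_i+(w_{t_i}-w_{t_i+1})\be_{t_i+1}$ with $\co_{\barw'}(\bara,\barb,\barpi)\ge \co^*_{\barw}(\bara,\barb)$. The crucial step is to compute what these modified weights can be. Since $w_{t_i}-w_{t_i+1}=0$ whenever $t_i<k$ (both entries equal $1$) or $t_i>k$ (both equal $0$), the weight vector changes only when $t_i=k$; in that case $w_k-w_{k+1}=1$ and $\bw_i'$ becomes the indicator of the top $k+1$ positions. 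Thus each $\bw_i'$ is either the original top-$k$ indicator or the top-$(k+1)$ indicator, so its support has size $k$ or $k+1$.

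Finally I would translate the ranking guarantee back into subsets. For each $i$ let $S_i$ be the set of elements occupying the top $|\supp(\bw_i')|$ positions of $\pi_i$, so that $|S_i|\le k+1$ and, because all nonzero weights equal $1$, $\cs_{\bw_i'}(\ba_i,\pi_i)=\sum_{j\in S_i}a_{ij}$ and $\cs_{\bw_i'}(\bb_i,\pi_i)=\sum_{j\in S_i}b_{ij}$. Substituting these identities into $\co_{\barw'}(\bara,\barb,\barpi)$ turns it into $\co(\bara,\barb,\barS)$ with $\barS=(S_1,\dots,S_m)$, and the inequality from Theorem~\ref{thm:main_global} becomes $\co(\bara,\barb,\barS)\ge \max_{|T_i|=k}\co(\bara,\barb,(T_1,\dots,T_m))$. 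There is no genuine obstacle here; the only care needed is the bookkeeping that the support of each $\bw_i'$ has size at most $k+1$ and that the cumulative-score-to-subset-sum identity is exact precisely because the surviving weights are all equal to one.
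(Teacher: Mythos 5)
Your proposal is correct and follows exactly the route the paper intends: the paper gives no separate proof of this corollary, treating it as the specialization of Theorem~\ref{thm:main_global} to the top-$k$ indicator weights, where the replacement $w_{t_i+1}\leftarrow w_{t_i}$ is nontrivial only at $t_i=k$ and turns the indicator of the top $k$ positions into that of the top $k+1$. Your bookkeeping of the ranking-to-subset correspondence is the right (and only) content needed, so there is nothing to add.
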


Our approach to prove Theorem~\ref{thm:main_global} is again very similar to how we proved Theorem~\ref{thm:main_local}. We write a convex programming relaxation and solve its dual program. 
We will also assume that the sequences $\ba_i,\bb_i,\bw$ are generic which can be ensure by perturbing all entries by tiny additive noise, this will not change $\co^*_\bw(\ba,\bb)$ by much. By a limiting argument, this will not affect the result.
We first develop a convex programming relaxation $\OPT$ as shown in~(\ref{eqn:global_primal}).

\begin{equation}
\label{eqn:global_primal}
\begin{aligned}
\OPT=\max_{\pi^i_{jk},\alpha_i,\beta_i,\alpha,\beta\ge 0}\ & F(\alpha,\beta)+\sum_{i=1}^m f_i(\alpha_i,\beta_i)&\\
s.t.\ &\forall i\in[m]\quad  \alpha_i \le \sum_{j,k=1}^n w_{ij}a_{ik}\pi^i_{jk}   &\dash (p_i)\\
&\forall i\in[m]\quad \beta_i \le \sum_{j,k=1}^n w_{ij}b_{ik}\pi^i_{jk}  &\dash (q_i)\\
&\alpha \le \sum_{i=1}^m\sum_{j,k=1}^n w_{ij}a_{ik}\pi^i_{jk}    &\dash (p)\\
&\beta \le \sum_{i=1}^m\sum_{j,k=1}^n w_{ij}b_{ik}\pi^i_{jk}   &\dash (q)\\
&\forall i\in[m], j\in [n]\quad \sum_{k=1}^n \pi^i_{jk} \le 1  &\dash (r_{ij})\\
&\forall i\in[m], k\in [n]\quad \sum_{j=1}^n \pi^i_{jk} \le 1  &\dash (c_{ik})
\end{aligned}
\end{equation}
It is clear that $\OPT$ is a relaxation for $\co^*_\bw(\ba,\bb)$ with $\co^*_\bw(\ba,\bb)\le \OPT.$ By convex programming duality, $\OPT$ can be expressed as a dual minimization problem~(\ref{eqn:global_dual}) by introducing a dual variable for every constraint in the primal as shown in (\ref{eqn:global_primal}). Again by Slater's condition, strong duality holds~\cite{BoydV04}. The constraints in the dual correspond to variables in the primal as shown in~(\ref{eqn:global_dual}).
\begin{equation}
\label{eqn:global_dual}
\begin{aligned}
\OPT=\min_{r_{ij},c_{ik},p_i,q_i,p,q\ge 0}\ & \sum_{i=1}^m\left(\sum_{j=1}^n r_{ij}+\sum_{k=1}^n c_{ik}+f_i^*(-p_i,-q_i)\right)+F^*(-p,-q)&\\
s.t.\ &\forall i\in [m],j,k\in[n]\quad r_{ij}+c_{ik}\ge w_{ij}\left((p_i+p)a_{ik}+(q_i+q)b_{ik}\right) &\dash (x_{ij})
\end{aligned}
\end{equation}

Here $f_i^*$ is the Fenchel dual of $f$ and $F^*$ is the Fenchel dual of $F.$
If $(\pi^*,\alpha_i^*,\beta_i^*,\alpha^*,\beta^*)$ is some optimal solution for the primal~(\ref{eqn:global_primal}) and $(r^*,c^*,p_i^*,q_i^*,p^*,q^*)$ is some optimal solution for the dual~(\ref{eqn:global_dual}), then they should together satisfy the KKT conditions given in~(\ref{eqn:KKT_conditions_global}). Note that a constraint of primal is tight if the corresponding variable in the dual is strictly positive and vice-versa.
\begin{equation}
\label{eqn:KKT_conditions_global}
\begin{array}{ccl|ccl}
p_i^*>0 &\Rightarrow& \sum_{j,k=1}^n w_{ij}a_{ik}\pi^{i*}_{jk}=\alpha_i^*  & \nabla f_i(\alpha_i^*,\beta_i^*)&=&(p_i^*,q_i^*)\\
q_i^*>0 &\Rightarrow& \sum_{j,k=1}^n w_{ij}b_{ik}\pi^{i*}_{jk}=\beta_i^*  & \nabla F(\alpha^*,\beta^*)&=&(p^*,q^*)\\
r^{*}_{ij}>0 &\Rightarrow& \sum_{k=1}^n \pi^{i*}_{jk} = 1 & \pi^{i*}_{jk}>0 &\Rightarrow& r_{ij}+c_{ik}\ge w_{ij}\left((p_i+p)a_{ik}+(q_i+q)b_{ik}\right)\\
c^{*}_{ik}>0 &\Rightarrow& \sum_{j=1}^n \pi^{i*}_{jk} = 1 & & & \\
p^*>0 &\Rightarrow& \sum_{i=1}^m\sum_{j,k=1}^n w_{ij}a_{ik}\pi^{i*}_{jk} = \alpha^* & & & \\
q^*>0 &\Rightarrow& \sum_{i=1}^m\sum_{j,k=1}^n w_{ij}b_{ik}\pi^{i*}_{jk} = \beta^* & & & 
\end{array}
\end{equation}

\begin{proposition}
\label{prop:dual_fixedpq_global}
Let $p,q,p_1,q_1,\dots,p_m,q_m> 0$ be fixed. Then the value of the minimization program in~(\ref{eqn:global_dual}) is given by $$\Psi(p,q,p_1,q_1,\dots,p_m,q_m)=F^*(-p,-q)+\sum_{i=1}^m \cs^*_{\bw_i}\left((p+p_i)\ba_i+(q+q_i)\bb_i\right)+f_i^*(-p_i,-q_i).$$ Moreover the KKT conditions~\ref{eqn:KKT_conditions_global} can be simplified to:
\begin{equation}
	\label{eqn:simplified_KKT_global}
	\begin{aligned}
	&\pi_i^*\in \convexhull\{\pi: \cs^*_\bw((p^*+p_i^*)\ba_i+(q^*+q_i^*)\bb_i)=\cs_\bw((p^*+p_i^*)\ba_i+(q^*+q_i^*)\bb_i,\pi)\},\\
	&\sum_{j,k=1}^n w_{ij}a_{ik}\pi^{i*}_{jk}=\alpha_i^*,\quad \sum_{j,k=1}^n w_{ij}b_{ik}\pi^{i*}_{jk}=\beta_i^*,\\
	&\sum_{i=1}^m\sum_{j,k=1}^n w_{ij}a_{ik}\pi^{i*}_{jk} = \alpha^*,\quad \sum_{i=1}^m\sum_{j,k=1}^n w_{ij}b_{ik}\pi^{i*}_{jk} = \beta^*,\\
	&\grad f_i(\alpha_i^*,\beta_i^*)=(p_i^*,q_i^*),\quad \grad F(\alpha^*,\beta^*)=(p^*,q^*). 
\end{aligned}
\end{equation}
\end{proposition}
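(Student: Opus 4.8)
The plan is to mirror the proof of Proposition~\ref{prop:dual_fixedpq} almost verbatim, exploiting the single structural fact that fixing \emph{all} the multipliers $p,q,p_1,q_1,\dots,p_m,q_m$ makes the dual~(\ref{eqn:global_dual}) separable across the $m$ instances. Once separability is established, each piece is literally the single-instance linear program already analyzed, with the combined score vector shifted by the global multipliers.

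First I would fix $p,q,p_1,q_1,\dots,p_m,q_m>0$ and drop the additive constants $F^*(-p,-q)$ and $\sum_i f_i^*(-p_i,-q_i)$, which do not involve the remaining variables $r_{ij},c_{ik}$. The residual program is
$$\min_{r_{ij},c_{ik}\ge 0}\ \sum_{i=1}^m\Big(\sum_{j=1}^n r_{ij}+\sum_{k=1}^n c_{ik}\Big)\quad\text{s.t.}\quad \forall i,j,k\ \ r_{ij}+c_{ik}\ge w_{ij}\big((p+p_i)a_{ik}+(q+q_i)b_{ik}\big).$$
Since the variables $r_{ij},c_{ik}$ and the constraints for distinct $i$ share no indices, this decomposes into $m$ independent linear programs, one per instance. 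Each subproblem has exactly the form appearing in the proof of Proposition~\ref{prop:dual_fixedpq}, with the weight vector $\bw$ replaced by $\bw_i$ and the combined score vector $p\ba+q\bb$ replaced by the shifted vector $(p+p_i)\ba_i+(q+q_i)\bb_i$.

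Next, for each $i$ I would invoke the same LP-duality and Birkhoff--von Neumann argument as in Proposition~\ref{prop:dual_fixedpq}: the LP dual maximizes $\sum_{j,k}\pi^i_{jk}w_{ij}\big((p+p_i)a_{ik}+(q+q_i)b_{ik}\big)$ over doubly stochastic $\pi^i$, whose optimum is attained at a permutation sorting $(p+p_i)\ba_i+(q+q_i)\bb_i$ in decreasing order, with value $\cs^*_{\bw_i}\big((p+p_i)\ba_i+(q+q_i)\bb_i\big)$. Summing over $i$ and restoring the constants yields the claimed formula for $\Psi$. The KKT simplification then follows coordinatewise: the constraint forcing $\pi_i^*$ into $\convexhull\{\pi:\cs^*_{\bw_i}(\cdot)=\cs_{\bw_i}(\cdot,\pi)\}$ is exactly the per-instance optimality just derived, while the defining equalities for $\alpha_i^*,\beta_i^*,\alpha^*,\beta^*$ and the gradient conditions $\grad f_i(\alpha_i^*,\beta_i^*)=(p_i^*,q_i^*)$, $\grad F(\alpha^*,\beta^*)=(p^*,q^*)$ are read off directly from~(\ref{eqn:KKT_conditions_global}), using strict positivity of the multipliers to turn the implications into equalities.

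There is no genuinely hard step here; the one point that must be checked with care is precisely the separability claim. The instances are coupled only through the global variables $\alpha,\beta$ and their dual multipliers $p,q$, and I would verify that this coupling enters \emph{solely} as the additive shift $(p+p_i,\,q+q_i)$ in the effective score vector of the per-instance constraint $r_{ij}+c_{ik}\ge w_{ij}\big((p+p_i)a_{ik}+(q+q_i)b_{ik}\big)$, and does \emph{not} introduce any constraint linking $r,c$ variables of different $i$ once $p,q$ are held fixed. This is what licenses the reduction to $m$ copies of the single-instance analysis, and it is the only place the global structure actually plays a role.
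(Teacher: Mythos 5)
Your proposal is correct and matches the paper's intended argument: the paper's own proof of this proposition is a one-line remark that one "writes a linear program for each sub-problem and the corresponding dual linear program," exactly the separability-plus-Birkhoff--von Neumann reduction you spell out. You have simply supplied the details the paper skips, including the key check that fixing all multipliers decouples the $r_{ij},c_{ik}$ variables across instances, leaving $m$ copies of the single-instance LP with score vector $(p+p_i)\ba_i+(q+q_i)\bb_i$.
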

\begin{proof}
 The proof is very similar to the proof of Proposition~\ref{prop:dual_fixedpq} where we write a linear program for each sub-problem and the corresponding dual linear program. We will skip the details.
 \end{proof}
 
 \begin{remark}
 \label{rmk:gradient_descent}
$$\Psi(p,q,p_1,q_1,\dots,p_m,q_m)=F^*(-p,-q)+\sum_{i=1}^m \cs^*_{\bw_i}\left((p+p_i)\ba_i+(q+q_i)\bb_i\right)+f_i^*(-p_i,-q_i)$$ is a convex function. The gradient (or a subgradient) of $\Psi$ can be calculated efficiently and therefore $\OPT=\min_{p,q> 0} \Psi(p,q)$ can be found efficiently using gradient (or subgradient) descent. The objective is also amenable to the use of stochastic gradient descent which can be much faster when $m\gg 1.$
 \end{remark}

\begin{proposition}
\label{prop:binary_search_OPT_global}
	For a fixed $p,q>0,$ we can find $$\min_{p_i,q_i> 0} \Psi(p,q,p_1,q_1,\dots,p_m,q_m)=\min_{p_i,q_i> 0} F^*(-p,-q)+\sum_{i=1}^m \cs^*_{\bw_i}\left((p+p_i)\ba_i+(q+q_i)\bb_i\right)+f_i^*(-p_i,-q_i)$$ efficiently using binary search.
\end{proposition}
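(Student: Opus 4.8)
The plan is to exploit the \emph{separable} structure of $\Psi$ once $p,q$ are frozen. For fixed $p,q>0$ the term $F^*(-p,-q)$ is a constant, and the remaining objective
$$\sum_{i=1}^m \left[ \cs^*_{\bw_i}\left((p+p_i)\ba_i+(q+q_i)\bb_i\right)+f_i^*(-p_i,-q_i)\right]$$
is a sum in which the $i$-th summand depends only on its own pair $(p_i,q_i)$. Hence the joint minimization decouples,
$$\min_{p_i,q_i>0}\Psi = F^*(-p,-q)+\sum_{i=1}^m \min_{p_i,q_i>0}\left[\cs^*_{\bw_i}\left((p+p_i)\ba_i+(q+q_i)\bb_i\right)+f_i^*(-p_i,-q_i)\right],$$
and it suffices to solve each of the $m$ inner problems independently.

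First I would observe that each inner problem is, up to a shift of the argument of $\cs^*$ by $(p,q)$, exactly the single-instance minimization solved in Proposition~\ref{prop:binary_search_OPT}. Writing $\psi_i(p_i,q_i)=\cs^*_{\bw_i}\left((p+p_i)\ba_i+(q+q_i)\bb_i\right)+f_i^*(-p_i,-q_i)$, this is a convex function of $(p_i,q_i)$, so a local minimum is global and it is enough to find a point with $0\in\partial\psi_i$. Carrying out the same computation as before (the shift by $(p,q)$ is an affine reparametrization of the argument and does not change the subgradient of $\cs^*_{\bw_i}$ taken with respect to $(p_i,q_i)$), and using the identity $\grad f_i^*(-p_i,-q_i)=(\alpha_i,\beta_i)\iff\grad f_i(\alpha_i,\beta_i)=(p_i,q_i)$, the optimality condition becomes the fixed-point equation
$$(p_i,q_i)\in \grad f_i\!\left(\partial \cs^*_{\bw_i}\left((p+p_i)\ba_i+(q+q_i)\bb_i\right)\right),$$
the exact analogue of the fixed-point condition~(\ref{eqn:fixed_point}).

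Next I would run the same binary search as in Proposition~\ref{prop:binary_search_OPT}, one instance at a time. For instance $i$ the subgradient $\partial\cs^*_{\bw_i}$ depends only on the ratio of the coefficients, which is now $\lambda_i=(q+q_i)/(p+p_i)$; as $(p_i,q_i)$ ranges over $\Qplus$ this ratio still sweeps all of $(0,\infty)$, so the critical set and the regions $A_j,R_j$ are defined from $\ba_i,\bb_i$ exactly as before, with constant subgradient $g_j^{(i)}=(\cs(\ba_i,\sigma_j),\cs(\bb_i,\sigma_j))$ on $A_j$. The only modification is bookkeeping for the shift: for a candidate region I compute $(p_i,q_i)=\grad f_i(g_j^{(i)})$ (which lies in $\Qplus$ since $f_i$ is strictly increasing, so the constraint $p_i,q_i>0$ is met automatically), and then test whether the \emph{shifted} point $(p+p_i,q+q_i)$ lies in, clockwise of, or anticlockwise of $R_j\cup A_j\cup R_{j+1}$, which defines the sign function $\phi_i$. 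Binary search then locates either a region with $\phi_i(j)=0$ or an adjacent sign change $\phi_i(j)=1,\phi_i(j+1)=-1$, exactly as in the single-instance argument, and in the latter case a one-dimensional binary search along $\gamma(\mu)=\grad f_i(\mu g_j^{(i)}+(1-\mu)g_{j+1}^{(i)})$ finds the crossing of $R_{j+1}$.

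The step I expect to require the most care is verifying that the shift by $(p,q)$ does not break the geometric picture underlying the binary search: I must check that $\phi_i$ remains monotone in $j$ once the region test is applied to the shifted point $(p+p_i,q+q_i)$ rather than to $(p_i,q_i)$ itself, and that the endpoints still satisfy $\phi_i(0)=1,\phi_i(m)=-1$ (barring an exact hit). Granting this, each inner problem is solved by the same near-linear binary search as Proposition~\ref{prop:binary_search_OPT}, and solving all $m$ of them yields the claimed efficient evaluation of $\min_{p_i,q_i>0}\Psi(p,q,p_1,q_1,\dots,p_m,q_m)$.
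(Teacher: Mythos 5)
Your proposal is correct and follows essentially the same route as the paper: decouple the sum over $i$ for fixed $p,q$, reduce each inner convex problem to the fixed-point condition $(p_i,q_i)\in \grad f_i\left(\partial \cs^*_{\bw_i}\left((p+p_i)\ba_i+(q+q_i)\bb_i\right)\right)$, and observe that the subgradient depends only on the ratio $\lambda=(q+q_i)/(p+p_i)$ so the binary search of Proposition~\ref{prop:binary_search_OPT} applies with the region test performed at the shifted point. The paper explicitly skips these details, so your more careful bookkeeping of the shift (and your flagged concern about monotonicity of $\phi_i$, which the paper also leaves implicit) is if anything slightly more thorough than the original.
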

\begin{proof}
	It is enough to the find $$\argmin_{p_i,q_i>0} \cs^*_{\bw_i}\left((p+p_i)\ba_i+(q+q_i)\bb_i\right)+f_i^*(-p_i,-q_i)$$ for a fixed $i.$ 
	By convexity of the objective, it is enough to find $(p_i,q_i)$ such that $\grad f_i^*(-p_i,-q_i)\in \partial cs^*_{\bw_i}\left((p+p_i)\ba_i+(q+q_i)\bb_i\right)$ which can be rewritten as:
	\begin{equation}
	\label{eqn:fixedpoint_global}
	(p_i,q_i)\in \grad f_i\left(\partial cs^*_{\bw_i}\left((p+p_i)\ba_i+(q+q_i)\bb_i\right)\right).
	\end{equation}
	This fixed point equation can be solved using binary search in exactly the same way as in Proposition~\ref{prop:binary_search_OPT}. $\partial cs^*_{\bw_i}\left((p+p_i)\ba_i+(q+q_i)\bb_i\right)$ only depends on the ration $\lambda=(q+q_i)/(p+p_i)$. Geometrically, this ratio is constant on any line passing through $(-p,-q)$. Figure~\ref{fig:multiranking_binarysearch} shows the regions where $\partial cs^*_{\bw_i}\left((p+p_i)\ba_i+(q+q_i)\bb_i\right)$ remains constant. Now the fixed point equation~(\ref{eqn:fixedpoint_global}) can be solved in exactly the same way as in the proof of Proposition~\ref{prop:binary_search_OPT}. We will skip the details.
		\begin{figure}[H]
	\caption{The positive quadrant $\Qplus$ is divided into regions $A_j$ and rays $R_j$ based on the values of the subgradients $\partial cs^*_{\bw_i}\left((p+p_i)\ba_i+(q+q_i)\bb_i\right)$. One can use binary search as in the proof of Proposition~\ref{prop:binary_search_OPT} to solve the fixed point equation~(\ref{eqn:fixedpoint_global}).}
	\label{fig:multiranking_binarysearch}
	\centering
	\includegraphics[scale=0.4]{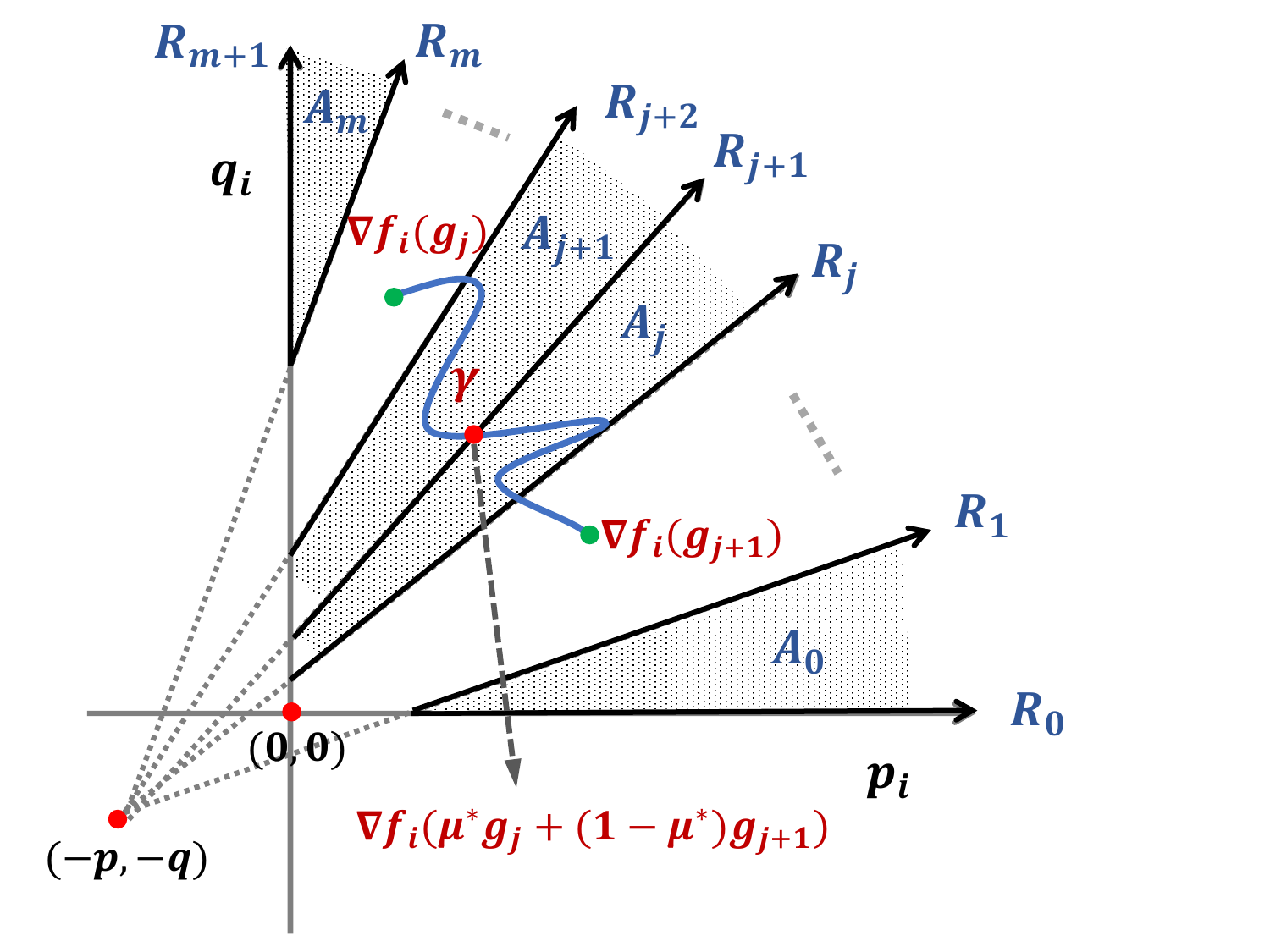}
	\end{figure}
\end{proof}

 We are now ready to prove Theorem~\ref{thm:main_global}.
 \begin{proof}[Proof of Theorem~\ref{thm:main_global}]
By Proposition~\ref{prop:binary_search_OPT_global} and \ref{rmk:one_or_two_permutations}, we can find a solution to the primal program~\ref{eqn:global_primal} where each $\pi_i^*$ is either a permutation or a convex combination of two permutations which differ by an adjacent transposition. If $\pi_i^*$ is a permutation then we just output $\pi^*$ which is the optimal ranking for the $i^{th}$ ranking problem.
If $\pi^*=\mu \pi_1 + (1-\mu) \pi_2$ i.e. a convex combination of $\pi_1,\pi_2$ which differ in the $j,j+1$ positions, then we output either $\pi_1$ or $\pi_2$ as the ranking for the $i^{th}$ subproblem.
 \end{proof}

\section{Experiments}
\label{sec:experiments}

\subsection{Synthetic Data}

We first present the results on synthetic data. 
The purpose of this experiment is to illustrate 
how different objective functions affect the distribution of the NDCGs. 
These results are summarized in Figures~\ref{fig:lognormal} and~\ref{fig:lognormal-cdf}. 
We present the scatter plot of the NDCGs, just like in Figure~\ref{fig:scatter}, as well as the cumulative distribution functions (CDFs). 

We aim to capture the multiple intents scenario where 
the results are likely to be good along one dimension but not both. 
The values are drawn from a log normal anti-correlated distribution: 
$\log(\aij)$ and the $\log(\bij)$s are drawn 
from a multivariate Gaussian with mean zero and covariance matrix 
\[ 
\begin{bmatrix}
 0.2 & -0.16           \\[0.3em]
-0.16  & 0.2          \\[0.3em]
\end{bmatrix}.
\]
A scatter plot of the distribution of these values is shown in Figure~\ref{fig:lognormal-ab-dist}.
\begin{figure}[H]
	\centering
	
	\includegraphics[width=.5\textwidth]{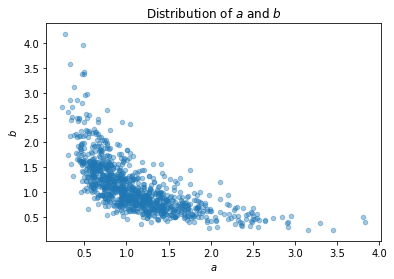}
	\caption{Distribution of $\aij$ and $\bij$ values drawn from an anti-correlated log normal distribution.}
	\label{fig:lognormal-ab-dist} 
\end{figure}
Other parameters of the experiment are as follows: 
we draw 50 results for each instance, i.e., $n = 50$. 
The weight vector is the same as in the Introduction, \eqref{eqn:weight}, 
except that we only consider the top 10 results, i.e., the coordinates of $\bw$ after the first 10 are 0. 
The number of different instances, $m$ is 500.

In addition to the product and the sum, we present the result of 
using two more combining functions: a quadratic and a normalized sum. 
Since we are plotting the NDCGs, a natural algorithm is to maximize the 
sum of the NDCGs. This is what the normalized sum does. 
The quadratic function first normalizes the scores to get the NDCGs, 
and then applies the function 
\[ 
f(x,y) = 2x - x^2 + 2y - y^2. 
\]

It can be seen from Figure \ref{fig:lognormal} 
that the 
concave functions are quite a bit more clustered than the additive functions. 
This can also be seen in the table inside the figure, which shows 
the sum of the cumulative scores, the DCGs, as well as the mean of the 
normalized cumulative scores, the NDCGs. 
These quantities are almost the same across all algorithms. 
We also show the standard deviations of the NDCGs, 
which quite well captures how clustered the points are, 
and shows a significant difference between the concave and the additive functions.

\begin{figure}[H]
	\centering
	
	\includegraphics[width=\textwidth]{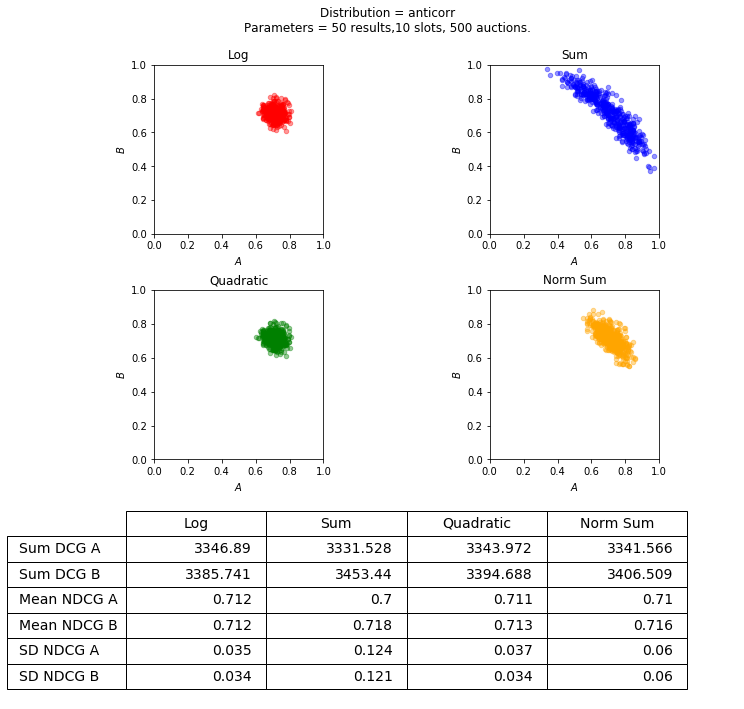}
	\caption{Scatter plot of NDCGs for two different objectives, $A$ and $B$, on synthetic data.}
	\label{fig:lognormal} 
\end{figure}

We present the CDFs of the NDCGs  for the four algorithms in Figure \ref{fig:lognormal-cdf}. 
The dots on the curves represent deciles, i.e., 
the values corresponding to the bottom $10\%$ of the population, 
$20\%$, and so on. 
Recall that a lower CDF implies that the values are higher.\footnote{
	Recall that a distribution $F$ stochastically dominates another distribution $G$ iff the cdf of $F$ is always below that of $G$.}
The CDF shows that in the bottom half of the distribution, 
the concave functions are higher than the additive ones. 
Also the steeper shape of the CDFs for the concave functions
show how they are more concentrated. 
There is indeed a price to pay in that the top half are worse
but this is unavoidable. 
The additive function picks a point on the Pareto frontier after all; 
in fact, it maximizes the mean of $A$ for a fixed mean of $B$ and vice versa. 
The whole point is that the mean is not necessarily an appropriate metric.

\begin{figure}[H]
	\centering
	
	\includegraphics[width=\textwidth]{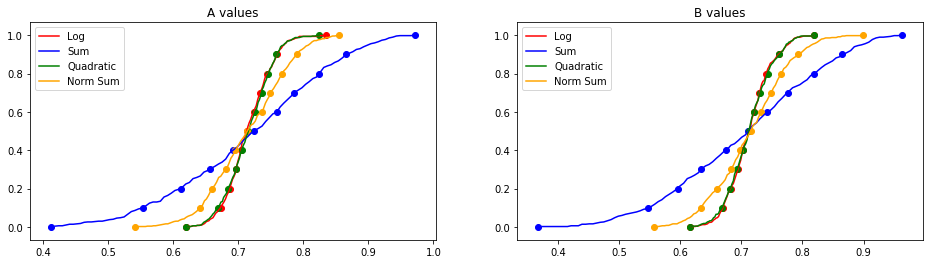}
	\caption{CDFs of NDCGs for two different objectives, $A$ and $B$, on synthetic data.}
	\label{fig:lognormal-cdf} 
\end{figure}

\subsection{Real Data}
The purpose of the experiment in this section is 
to show how the ideas from this paper could help in a realistic setting of ranking. 
We present experiments on real data from LinkedIn, 
sampled from one day of results from their \emph{news feed}.
The number of instances is about 33,000. 
The results are either \emph{organic} or \emph{ads}. 
Organic results only have relevance scores, their revenue scores are 0. 
Ads have both relevance and revenue scores.

The objectives are ad revenue and  relevance.
We will denote the ad revenue by $A$ and relevance by $B$. 
We use the same weight vector $\bw$ in the introduction, \eqref{eqn:weight}, up to 10 coordinates. 
Ad revenue can be added across instances, so we just 
sum up the ad revenue across different instances and 
tune the algorithms so that they all have roughly the same revenue. 
(The difference is less than $1\%$.)
It makes less sense to add the relevance scores. 
In fact it is more important to make sure that no instance gets a really bad relevant score, rather than optimize for the mean or even 
the standard deviation. 
To do this, we aim to make the bottom quartile (25\%) as high as possible.

Motivated by the above consideration, for the relevance objective, we consider a function that has a steep penalty for lower values. 
We first normalize the scores to get the NDCG, and then apply this function on the normalized value. 
For revenue, we just add up the cumulative scores. 
The function we use to combine the two cumulative scores is thus
$$f(x,y) = x - e^{-c_1y/\cs_{\bw}^*(\bb) - c_2} , $$
for some suitable constants $c_1$ and $c_2$. 
Higher values of $c_1$ make the curve steeper and make the distribution more concentrated.  
We choose $c_1$ so that we benefit the bottom quartile as much as possible. 
The constant $c_2$ is tuned so that the total revenue is close to some target. 

We compare this with an additive function. The revenue term is 
not normalized whereas the relevance term is.
This function is 
$$g(x,y) = x + c_3y/\cs_{\bw}^*(\bb)  , $$
for some suitable choice of $c_3$. 
Once again $c_3$ is tuned to achieve a revenue target. 

We also add constraints on the ranking to better reflect the real scenario
(although not exactly the same constraints as used in reality, for confidentiality reasons). 
An ad cannot occupy the first position, and the total number of ads in the top 10 positions is at most 4. 
It is quite easy to see that we can optimize a single objective given these constraints.\footnote{Although our guarantees don't extend, our algorithm extends to handle such constraints, as long as we can solve the problem of optimizing a single objective. In experiments, the algorithm seems to do well. It is an interesting open problem to generalize our guarantees to such settings.} 
We first sort by the score, then slide ads down if the first slot has an ad, and finally remove any ads beyond the top 4. 

We present the CDFs of the NDCGs for relevance for the two algorithms in Figure \ref{fig:cdf}. 
The figure shows that in the bottom quartile the exp function does better, 
and the relation flips above this. 
For the bottom decile, the difference is significant. 
As mentioned earlier, this is exactly what we wanted to achieve.


Another important aspect of a ranking algorithm in this context 
is the set of positions that ads occupy. 
In Figure \ref{fig:addistribution}, we show this distribution: 
for each position, we show the number of instances for which there was an ad in that position. 
For the additive function, which is the graph at the bottom, 
most of the ads are clustered around positions 2 to 4, 
and the number gradually decreases further down. 
The distribution in case of the exp function is better spread out. 
Interestingly, the most common position an ad is shown is the very last one. 
\begin{figure}[H]
	\centering
	
	\includegraphics[width=.7\textwidth]{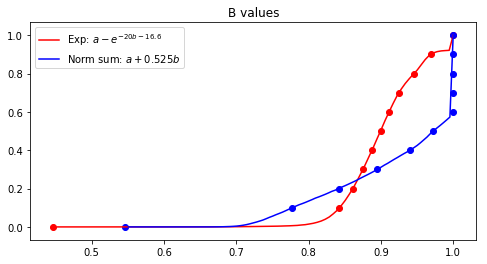}
	\caption{CDF of NDCGs for objectives $B$, relevance, on real data.}
	\label{fig:cdf} 
\end{figure}
\begin{figure}[H]
	\centering
	
	\includegraphics[width=.6\textwidth]{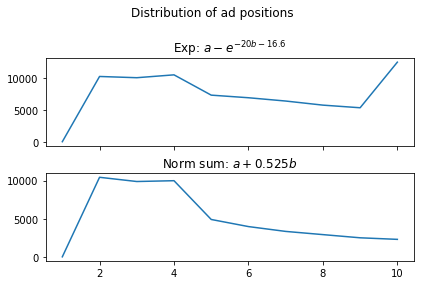}
	\caption{Distribution of ads by position.}
	\label{fig:addistribution} 
\end{figure}

To conclude, the \emph{choice of a concave} function to combine the different objectives gives a greater degree of freedom to ranking  algorithms. 
This freedom can be used to better control several important metrics 
in search and ad rankings. 
This experiment shows how this can be done for the relevance NDCGs in the bottom quartile, or 
for the distribution of ad positions.

\bibliographystyle{plainnat}
\bibliography{references}
\appendix
\clearpage
\section{Running time of binary search in Proposition~\ref{prop:binary_search_OPT}}
 \label{sec:running_time_binarysearch_local}
 \paragraph{Getting $O(n\log n\log B)$ running time.} 
 Note that the critical set $C$ can be of size $\binom{n}{2}$ and doing binary search on this set naively will need us to list all the critical values and then sort them. We can avoid this by doing a binary search directly on the ratio $\lambda=q/p$.

Recall that the critical set $C={\lambda_1,\dots,\lambda_m}$ where $\lambda_1<\lambda_2<\dots<\lambda_m$ and we set $\lambda_0=0,\lambda_{m+1}=\infty$. By the assumption that $a_i,b_i$ are integers bounded by $B$, $\lambda_m < 2B+1$ and $\lambda_{i+1}-\lambda_i\ge 1/8B^2$ for every $i$.
Define $\tilde \phi:\R^{\ge 0}\to \{-1,0,1\}$ as follows:
	\begin{align*}
		\tilde \phi(\lambda)=
		\begin{cases}
			0 &\text{ if } \lambda_i \le \lambda <\lambda_{i+1} \text{ and } \grad f(g_i)\in R_i\cup A_i \cup R_{i+1}\\
			+1 &\text{ if } \lambda_i \le \lambda <\lambda_{i+1} \text{ and }\grad f(g_i) \text{ lies anticlockwise to } R_i \cup A_i \cup R_{i+1}\\
			-1 &\text{ if } \lambda_i \le \lambda <\lambda_{i+1} \text{ and }\grad f(g_i) \text{ lies clockwise to } R_i \cup A_i \cup R_{i+1}.
		\end{cases}
	\end{align*}

\begin{claim}
\label{clm:compute_tildephi}
Given $\lambda^*>0$, we can compute $\tilde\phi(\lambda^*)$ in $O(n\log n)$ time.
\end{claim}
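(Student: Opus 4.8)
The plan is to give a direct algorithm whose only super-linear step is a single sort, so that the whole evaluation of $\tilde\phi(\lambda^*)$ runs in $O(n\log n)$ time. First I would sort the $n$ numbers $a_j+\lambda^* b_j$ in decreasing order to obtain the permutation $\sigma$ that is optimal at ratio $\lambda^*$; ties (which occur exactly when $\lambda^*$ equals a critical value $\lambda_i$) are broken by placing the entry with the larger $b$-coordinate first, so that $\sigma$ agrees with the permutation $\sigma_i$ governing the region $A_i$ immediately above $\lambda^*$, matching the index $i$ used in the definition of $\tilde\phi$. From $\sigma$ I would compute $g_i=(\cs_\bw(\ba,\sigma),\cs_\bw(\bb,\sigma))$ in $O(n)$ time, then evaluate $(p_0,q_0)=\grad f(g_i)\in\Qplus$ in $O(1)$ time and set $\mu_0=q_0/p_0$.

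The crux is to decide which of the three cases of $\tilde\phi$ holds \emph{without} ever computing the critical values $\lambda_i,\lambda_{i+1}$ bounding the region. The key observation is that $\grad f(g_i)$ lies in the closed cone $R_i\cup A_i\cup R_{i+1}$ exactly when its ratio satisfies $\mu_0\in[\lambda_i,\lambda_{i+1}]$, and this in turn holds if and only if $\sigma$ sorts $\ba+\mu_0\bb$ in non-increasing order. Indeed, for $\mu$ strictly inside $(\lambda_i,\lambda_{i+1})$ the permutation $\sigma=\sigma_i$ is the unique sorter, at each endpoint it still sorts with a single tie, and once $\mu$ crosses either $\lambda_i$ or $\lambda_{i+1}$ the adjacent pair defining that critical value becomes strictly out of order in $\sigma$. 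Hence I would simply scan the consecutive differences $(a_{\sigma(j)}+\mu_0 b_{\sigma(j)})-(a_{\sigma(j+1)}+\mu_0 b_{\sigma(j+1)})$ for $j=1,\dots,n-1$ in $O(n)$ time; if they are all nonnegative, output $\tilde\phi(\lambda^*)=0$.

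If the scan detects a violation then $\mu_0\notin[\lambda_i,\lambda_{i+1}]$, and since $\lambda^*\in[\lambda_i,\lambda_{i+1})$ a single comparison of $\mu_0$ with $\lambda^*$ resolves the sign: if $\mu_0>\lambda^*$ then necessarily $\mu_0>\lambda_{i+1}$, so $\grad f(g_i)$ lies anticlockwise to the cone and I output $+1$; if $\mu_0<\lambda^*$ then $\mu_0<\lambda_i$, so $\grad f(g_i)$ lies clockwise and I output $-1$. The case $\mu_0=\lambda^*$ cannot reach this branch, because then $\sigma$ sorts $\ba+\mu_0\bb$ and the scan would already have returned $0$. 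Every step after the initial sort is $O(n)$, so the total running time is $O(n\log n)$.

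The only delicate point, and the part I would be most careful about, is matching the tie-breaking convention at a critical $\lambda^*$ to the half-open intervals $[\lambda_i,\lambda_{i+1})$ in the definition of $\tilde\phi$, together with verifying that the ``sorts in non-increasing order'' test captures the \emph{closed} cone $R_i\cup A_i\cup R_{i+1}$ (both boundary rays included) rather than the open region. Once the orientation convention (larger ratio $\lambda=q/p$ is anticlockwise) and this boundary bookkeeping are pinned down, correctness of the three outputs follows from the geometric description of the subgradient regions established before the claim, and the running time bound is immediate.
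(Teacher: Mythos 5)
Your proposal is correct and takes essentially the same route as the paper: sort $\ba+\lambda^*\bb$ once in $O(n\log n)$ time and then spend $O(n)$ on the adjacent pairs of the resulting permutation to locate the region and decide the sign. The only cosmetic difference is that the paper explicitly computes $\lambda_i$ and $\lambda_{i+1}$ as the nearest adjacent-pair crossing points above and below $\lambda^*$, whereas you test $\mu_0\in[\lambda_i,\lambda_{i+1}]$ by re-scanning sortedness of $\ba+\mu_0\bb$ under the same permutation --- an equivalent $O(n)$ check.
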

\begin{proof}
It is enough to show that we can find $\lambda_i,\lambda_{i+1}$ such that $\lambda_i \le \lambda^* <\lambda_{i+1}$ in $O(n\log n)$ time. 
We can find the ranking $\pi$ which sorts $\ba+\lambda^* \bb$ in decreasing order in $O(n\log n)$ time. We can then evaluate $g_i=(\cs(\ba,\pi),\cs(\bb,\pi))$ in $O(n)$ time. Now $\lambda_{i+1}$ is the first critical point where the sorted order of $\ba+\lambda\bb$ switches from $\pi$ if we imagine increasing $\lambda$ from $\lambda^*$. Once $\lambda$ crosses $\lambda_{i+1}$, some adjacent elements of $\pi$ switch positions in the sorted order of $\ba+\lambda\bb$. Therefore $$\lambda_{i+1}=\min\left\{\lambda: \lambda>\lambda^* \text{ and } a_{\pi(i)}+\lambda b_{\pi(i)}=a_{\pi(i+1)}+\lambda b_{\pi(i+1)} \text{ for some } i\in[n-1]\right\}$$ where if the set is empty we set $\lambda_{i+1}=\infty$. Note that this can be computed in $O(n)$ time. Similarly $$\lambda_{i}=\min\left\{\lambda: \lambda\le \lambda^* \text{ and } a_{\pi(i)}+\lambda b_{\pi(i)}=a_{\pi(i+1)}+\lambda b_{\pi(i+1)} \text{ for some } i\in[n-1]\right\}$$ where if the set is empty we set $\lambda_{i}=0$. 
\end{proof}

Now we claim that it is enough to find some $\lambda_\ell<\lambda_u$ such that one of the following is true:
\begin{enumerate}
	\item $\tilde\phi(\lambda_\ell)=0$ or $\tilde\phi(\lambda_u)=0$. This is similar to the case when $\phi(i)=0$.
	\item $\tilde\phi(\lambda_\ell)=+1,\tilde\phi(\lambda_u)=-1$ and $\lambda_u-\lambda_\ell<1/8B^2$. There can be at most one critical point between $\lambda_\ell,\lambda_u$ i.e. there exists a unique $i$ such that $\lambda_\ell<\lambda_{i+1}<\lambda_u$. Therefore $\lambda_\ell$ and $\lambda_u$ must belong to adjacent regions. This is similar to the case when $\phi(i)=1,\phi(i+1)=-1.$
\end{enumerate}
Now if either $\tilde\phi(0)=0$ or $\tilde\phi(2B+1)=0$, we are done. Otherwise $\tilde\phi(0)=+1$ and $\tilde\phi(2B+1)=-1$. Using binary search in the range $[0,2B+1]$, one can find such $\lambda_\ell,\lambda_u$ in $O(\log B)$ iterations. Since each iteration runs in $O(n\log n)$ time, the total running time is $O(n\log n \log B).$

\paragraph{Getting strongly polynomial randomized $O(n\log^2 n)$ running time.} We will only give a proof sketch. In this case we cannot do a binary search over $\lambda.$ Because all the critical $\lambda$ can be concentrated in a small region and we may take a long time to find this region. Before we proceed we make a few claims.
\begin{claim}
	\label{clm:inversions}
	Given two (generic) sequences of numbers $\bc$ and $\bd$ of length $n$, let $I$ be the set of inversions of $\bd$ w.r.t $\bc$ i.e. $I=\left\{(i,j):i<j, \frac{c_i-c_j}{d_j-d_i}>0\right\}$. We can find the size $|I|$ in $O(n\log n)$ time and we can sample uniformly at random from $I$ in $O(n\log n)$ time.\footnote{Note that there can be as many as $\binom{n}{2}$ inversions and so we cannot list them all.}
\end{claim}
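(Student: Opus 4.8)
The plan is to reduce the statement to the classical problem of counting and sampling inversions of a sequence, and then to exploit the fact that the total inversion count decomposes as a sum over ``right endpoints'' so that uniform sampling comes almost for free.

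First I would observe that the defining condition $\frac{c_i-c_j}{d_j-d_i}>0$ is symmetric under swapping $i$ and $j$ (both numerator and denominator change sign), so $I$ is genuinely a set of unordered pairs and the restriction $i<j$ only serves to avoid double counting. Consequently $|I|$ does not depend on the original indexing, and a pair $\{i,j\}$ lies in $I$ exactly when $\bc$ and $\bd$ order it oppositely, i.e. $c_i<c_j$ but $d_i>d_j$ (or vice versa). I would therefore sort the coordinates by their $\bc$-value once, in $O(n\log n)$ time, and relabel so that $c_1<c_2<\dots<c_n$ (genericity rules out ties). In these new labels $I=\{(i,j): i<j,\ d_i>d_j\}$ is precisely the inversion set of the sequence $\bd$, and $|I|$ is its number of inversions, which can be computed in $O(n\log n)$ time by a Fenwick tree (after coordinate-compressing the $d$-values, insert $d_1,d_2,\dots$ in order, and before inserting $d_j$ query how many already-inserted values exceed $d_j$), or equivalently by the standard merge-sort counting procedure.

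The key quantity for sampling is $k_j := |\{i<j : d_i>d_j\}|$, the number of inversions whose right endpoint is $j$; these are exactly the per-element counts produced as a by-product of the Fenwick-tree pass, and $|I|=\sum_{j=1}^n k_j$. I would then sample in two stages: first choose a right endpoint $j$ with probability $k_j/|I|$ (using a prefix-sum array over the $k_j$ and one binary search, after $O(n)$ preprocessing), and then choose the left endpoint uniformly among the $k_j$ indices $i<j$ with $d_i>d_j$. For a fixed inversion $\{i,j\}$ this yields probability $\frac{k_j}{|I|}\cdot\frac1{k_j}=\frac1{|I|}$, so the output is uniform on $I$, as required. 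The second stage can be carried out by a single linear scan of $d_1,\dots,d_{j-1}$ collecting the $k_j$ valid partners and returning a random one, which costs $O(n)$; translating the sampled labels back to the original indices is $O(1)$. Hence a sample costs $O(n)\subseteq O(n\log n)$ on top of the $O(n\log n)$ counting pass.

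The genuinely new ingredient beyond textbook inversion counting is the sampling, and the only point that needs care is verifying the uniformity of the two-stage scheme, which the displayed product of probabilities settles; everything else is bookkeeping. If one wanted each individual sample to run in $O(\log n)$ rather than $O(n)$ (useful when many samples are drawn), the second stage could instead be answered by a persistent Fenwick or segment tree that selects the $t$-th element with $d$-value above a threshold among a prefix, but this refinement is not needed for the stated $O(n\log n)$ bound. I expect the main obstacle to be purely expository, namely stating the Fenwick-tree counting and the two-stage sampler cleanly; there is no conceptual difficulty once the symmetric condition is recognized as ordinary inversions.
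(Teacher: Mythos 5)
Your proposal is correct and follows essentially the same route as the paper's (much terser) sketch: sort both sequences by $\bc$ so that membership in $I$ reduces to ordinary inversions of $\bd$, then count by merge sort or a Fenwick tree. Your two-stage sampler via the per-right-endpoint counts $k_j$ is just an explicit implementation of the sampling step the paper waves at with ``we can count and sample from inversions during this process,'' and your uniformity verification is the detail worth having.
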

\begin{proof}
	We only give a proof sketch. Wlog, we can assume that $\bc$ is already sorted by applying the same permutation to both $\bc,\bd.$ We now sort $\bd$ using the merge sort algorithm and it is not hard to see that we can count and sample from inversions during this process.
\end{proof}

\begin{claim}
	\label{clm:random_lambda}
	Given any $\lambda_\ell<\lambda_u$, we can sample uniformly at random from $C\cap [\lambda_\ell,\lambda_u]$ in $O(n\log n)$ time.
\end{claim}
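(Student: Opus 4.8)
The plan is to reduce the task of sampling from $C \cap [\lambda_\ell,\lambda_u]$ to the inversion–sampling primitive of Claim~\ref{clm:inversions}, applied to two carefully chosen ``snapshot'' sequences. First I would recall that every pair $1\le i<j\le n$ contributes at most one critical value $\lambda_{ij}=\frac{a_i-a_j}{b_j-b_i}$, namely the unique root of the linear function $g_{ij}(\lambda):=(a_i-a_j)+\lambda(b_i-b_j)$, which is the value of $\lambda$ at which $a_i+\lambda b_i$ and $a_j+\lambda b_j$ cross. Since $g_{ij}$ is linear in $\lambda$, its root $\lambda_{ij}$ lies strictly inside $(\lambda_\ell,\lambda_u)$ if and only if $g_{ij}(\lambda_\ell)$ and $g_{ij}(\lambda_u)$ have opposite signs, i.e. if and only if the relative order of $i$ and $j$ under the snapshot sequence $\ba+\lambda_\ell\bb$ differs from their relative order under $\ba+\lambda_u\bb$.

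This is exactly the notion of an inversion between those two snapshots. Accordingly I would set $\bc=\ba+\lambda_\ell\bb$ and $\bd=\ba+\lambda_u\bb$ and observe that the map $(i,j)\mapsto \lambda_{ij}$ is a bijection between the inversion set $I=\{(i,j):i<j,\ \frac{c_i-c_j}{d_j-d_i}>0\}$ of Claim~\ref{clm:inversions} and $C\cap(\lambda_\ell,\lambda_u)$. Indeed $g_{ij}(\lambda_\ell)=c_i-c_j$ and $g_{ij}(\lambda_u)=d_i-d_j$, so the opposite-sign condition $\frac{c_i-c_j}{d_i-d_j}<0$ is literally the defining condition $\frac{c_i-c_j}{d_j-d_i}>0$ of $I$. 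By the genericity assumption all the $\lambda_{ij}$ are distinct, so this correspondence is a genuine bijection; in particular $|C\cap(\lambda_\ell,\lambda_u)|=|I|$, which also gives the count for free.

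With the bijection established, the claim follows immediately: I invoke Claim~\ref{clm:inversions} on $\bc,\bd$ to draw a uniformly random inversion $(i,j)\in I$ in $O(n\log n)$ time, and then output $\lambda_{ij}=\frac{a_i-a_j}{b_j-b_i}$. Because the map is a bijection, a uniform sample from $I$ pushes forward to a uniform sample from $C\cap(\lambda_\ell,\lambda_u)$. The endpoints require only a cosmetic remark: under genericity no critical value equals exactly $\lambda_\ell$ or $\lambda_u$, so $C\cap[\lambda_\ell,\lambda_u]=C\cap(\lambda_\ell,\lambda_u)$ and the open/closed distinction is immaterial (and if one insists on including an endpoint that is itself a known critical value, one simply adds it back by hand).

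The main obstacle—really the only step needing care—is verifying this bijection, i.e. that ``$\lambda_{ij}$ falls in the window $(\lambda_\ell,\lambda_u)$'' and ``the ordered pair flips between the snapshots $\ba+\lambda_\ell\bb$ and $\ba+\lambda_u\bb$'' are the same event, and that the sign convention lines up exactly with the inversion set of Claim~\ref{clm:inversions}. Once this sign analysis is pinned down, everything else is a black-box call to the counting/sampling routine of Claim~\ref{clm:inversions}, whose correctness and $O(n\log n)$ running time are already in hand.
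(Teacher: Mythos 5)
Your proposal is correct and follows essentially the same route as the paper: both reduce the problem to sampling an inversion between the snapshot sequences $\ba+\lambda_\ell\bb$ and $\ba+\lambda_u\bb$ via the bijection between such inversions and the critical values falling in the window, then invoke Claim~\ref{clm:inversions}. Your version merely makes the sign analysis of $g_{ij}(\lambda)$ more explicit than the paper's ``a switch happens in the sorted order'' phrasing, which is a welcome but not substantively different level of detail.
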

\begin{proof}
	Let $I$ be the set of inversions of $\ba+\lambda_u\bb$ w.r.t $\ba+\lambda_\ell \bb$. We claim that $$C\cap [\lambda_\ell,\lambda_u]=\{\lambda: a_i+\lambda b_i=a_j +\lambda b_j, (i,j)\in I\}.$$ This is because when you imagine increasing $\lambda$ from $\lambda_\ell$ to $\lambda_u$, $C\cap [\lambda_\ell,\lambda_u]$ is the set of critical points where a switch happens in the sorted order of $\ba+\lambda \bb$. Therefore the critical points in $[\lambda_\ell,\lambda_u]$ correspond exactly to the inversions $I.$ For each inversion $(i,j)\in I$, $\frac{a_i-a_j}{b_j-b_i}\in C\cap [\lambda_\ell,\lambda_u]$.
\end{proof}

Suppose $\phi(0)=1$ and $\phi(m)=-1$ (otherwise we are done), where $\phi$ is the function defined in Equation~(\ref{eqn:phi_binarysearch}). We want to find an $i$ such that $\phi(i)=0$ or $\phi(i)=1,\phi(i+1)=-1$. Set $\lambda_\ell=0,\lambda_u=\infty$. From Claim~\ref{clm:random_lambda}, we can sample a uniformly random $\lambda\in C\cap [\lambda_\ell,\lambda_u]$ in $O(n\log n)$ time. Suppose $\lambda=\lambda_i$, then we can find $\lambda_{i+1}$ and $g_i$ as shown in Claim~\ref{clm:compute_tildephi} in $O(n\log n)$ time. Therefore we can evaluate $\phi(i)$ in $O(n\log n)$ time. Now we continue the binary search based on the value of $\phi(i)$ and update the value of the lower bound $\lambda_\ell=\lambda$ or the upper bound $\lambda_u=\lambda$. In each iteration, the random $\lambda\in C\cap [\lambda_\ell,\lambda_u]$ will eliminate constant fraction of points in $C\cap [\lambda_\ell,\lambda_u]$ i.e. the size of $C\cap [\lambda_\ell,\lambda_u]$ shrinks by a constant factor in expectation. Therefore the algorithm should end in $O(\log n)$ iterations with high probability. In fact, we can stop the sampling process once the size of $C\cap [\lambda_\ell,\lambda_u]$ becomes $O(n)$ and then do a regular binary search on them by listing them all. Since the running time of each iteration is $O(n\log n),$ the total running time is $O(n\log^2 n).$


\end{document}